\newcommand{\size}{\texttt{size}}
\newcommand{\degree}{\texttt{degree}}
\newcommand{\softcore}{\texttt{softcore}}
\newcommand{\degeneracy}{\texttt{degeneracy}}
\newcommand{\combo}{\texttt{combo}}
\newcommand{\ttt}{\texttt{TTT}}
\newcommand{\hkmn}{\texttt{HKMN}}
\newcommand{\bigO}{\mathcal{O}}
\definecolor{lipicsblue}{rgb}{0.08235294118,0.3098039216,0.537254902}
\definecolor{defblue}{rgb}{0.121,0.47,0.705}
\definecolor{orangered}{rgb}{1,0.271,0}
\definecolor{darkred}{rgb}{0.7,0.1,0}
\DeclareTextFontCommand{\emph}{\color{defblue}\em}
\title{Engineering Algorithms for $\ell$-Isolated Maximal Clique Enumeration}
\author{Marco {D'Elia}}{Roma Tre University, Rome, Italy}{marco.delia@uniroma3.it}{https://orcid.org/0009-0008-6266-3324}{}
\author{Irene Finocchi}{Luiss Guido Carli, Rome, Italy}
{finocchi@luiss.it}{https://0000-0002-6394-6798}{}
\author{Maurizio Patrignani}{Roma Tre University, Rome, Italy}{maurizio.patrignani@uniroma3.it}{https://orcid.org/0000-0001-9806-7411}{}
\authorrunning{M. D'Elia et al.}
\titlerunning{Engineering Algorithms for $\ell$-Isolated Maximal Clique Enumeration}
\keywords{Maximal cliques, isolated cliques, enumeration algorithms, clique summarization, algorithm engineering}
\begin{document}

\maketitle

\begin{abstract}
Maximal cliques play a fundamental role in numerous application domains, where their enumeration can prove extremely useful. Yet their sheer number, even in sparse real-world graphs, can make them impractical to be exploited effectively.  
To address this issue, one approach is to enumerate $\ell$-isolated maximal cliques, whose vertices have (on average) less than $\ell$ edges toward the rest of the graph. By tuning parameter~$\ell$, the degree of isolation can be controlled, and cliques that are overly connected to the outside are filtered out. 
Building on Tomita {\em et al.}’s very practical recursive algorithm for maximal clique enumeration, we propose four pruning heuristics, applicable individually or in combination, that discard recursive search branches that are guaranteed not to yield $\ell$-isolated maximal cliques.  
Besides proving correctness, we characterize both the pruning power and the computational cost of these heuristics, and we conduct an extensive experimental study comparing our methods with Tomita’s baseline and with a state-of-the-art approach.  
Results show that two of our heuristics offer substantial efficiency improvements, especially on real-world graphs with social network properties.

\end{abstract}

\section{Introduction}\label{sec:introduction}

Cliques capture dense, highly-connected network structures. Their enumeration is important in many application domains, such as Web mining, bioinformatics, computational chemistry, social and financial network analysis. In particular, enumerating {\em maximal} cliques — those that cannot be extended by another vertex — helps reveal the graph's structure by uncovering tightly connected groups of vertices that share meaningful relationships.

From both theoretical and practical perspectives, the problem presents significant challenges. The set of maximal cliques includes indeed also {\em maximum} cliques, but computing a maximum clique is a well-known \NP-hard problem~\cite{DBLP:conf/coco/Karp72}, as well
as \W[1]-hard~\cite{DBLP:series/mcs/DowneyF99,downey2012parameterized}, and hard to approximate within a factor of $n^{1-\epsilon}$~\cite{hastad1999clique,DBLP:conf/stoc/Zuckerman06}. 
In general, the number of maximal cliques in an $n$-vertex graph can be exponential in $n$~\cite{Moon1965OnCI}. Hence, not only their enumeration requires considerable time, but navigating such a large number of structures is also difficult. Imposing additional constraints may pinpoint a subset of domain-relevant structures, substantially reduce the number of enumerated cliques, and enable more effective utilization. We refer to~\cite{DFP25} for a taxonomic survey of different summarization approaches.


In contexts where cliques with weak connections to the rest of the graph are more meaningful than those strongly intertwined with it, the concept of {\em $\ell$-isolation} may prove particularly useful. First introduced in~\cite{DBLP:conf/esa/ItoIO05,DBLP:journals/talg/ItoI09},
$\ell$-isolation requires that the number of edges leaving a clique of size $k$ does not exceed $\ell\cdot k$.
In the same works, the authors proposed an \FPT\ algorithm in the isolation factor $\ell$ to enumerate all $\ell$-isolated maximal cliques. 
It was also shown that $\ell$ must be constant to achieve linear-time enumeration, since otherwise one can construct instances admitting a superlinear number of $\ell$-isolated maximal cliques.

In~\cite{DBLP:journals/tcs/KomusiewiczHMN09}, two additional isolation principles were introduced, namely max- and min-isolation. 
A systematic comparison of different isolation notions, complemented by an experimental evaluation on synthetic and financial networks, was presented in~\cite{DBLP:journals/tcs/HuffnerKMN09}, underscoring the importance of isolated clique enumeration for network analysis.
Isolation concepts have been also explored in temporal networks~\cite{DBLP:journals/netsci/MolterNR21,DBLP:conf/gpc/GaoHYYM20}, and applied to practical scenarios such as web structure mining~\cite{DBLP:journals/ieicet/UnoOU07}, hierarchical representations of scale-free networks~\cite{DBLP:journals/jgaa/ShigezumiUW11}, and network contraction via isolated cliques~\cite{DBLP:journals/ieicet/UnoO13}. Alternative isolation models have also been proposed, including $\tau$-isolation based on coreness~\cite{DBLP:conf/mldm/OkuboHT16}, as well as variants for stars~\cite{DBLP:conf/waw/UnoOU06}, bicliques~\cite{DBLP:journals/ijcm/AlamgirKH17}, and even approaches inspired by quantum algorithms~\cite{DBLP:conf/opodis/GallNNO24}.


\medskip
\noindent{\bf Our contribution.} In this paper, building upon the well-known Bron-Kerbosch algorithm~\cite{DBLP:journals/cacm/BronK73} for maximal clique enumeration, improved by a pivoting strategy introduced by Tomita {\em et al.}~\cite{DBLP:conf/cocoon/TomitaTT04,DBLP:journals/tcs/TomitaTT06},
we first design four pruning heuristics aimed at reducing the search space, proving their correctness. The heuristics, that can be applied individually or in combination, allow the enumeration algorithm to focus only on promising parts of the search tree, discarding branches that are guaranteed not to yield $\ell$-isolated maximal cliques.
Following a theoretical characterization of the heuristics' pruning power and computational cost, which turn out to be inversely proportional, we conduct an extensive experimental study assessing their practical impact. Our analysis addresses both real and synthetic instances and contributes along several dimensions:
\begin{itemize}
    \item It sheds light on the properties of $\ell$-isolated cliques relative to the full set of maximal cliques in real-world graphs. This can be particularly useful for guiding the choice of parameter $\ell$.
    
    \item It gives insights into the relative power of our pruning strategies and of their combinations, identifying the most effective ones.
    
    \item It compares our approach with a carefully tuned baseline algorithm~\cite{DBLP:conf/cocoon/TomitaTT04,DBLP:journals/tcs/TomitaTT06} 
    and with a top-down state-of-the-art approach to the problem~\cite{DBLP:journals/tcs/HuffnerKMN09}. Overall, two of our variants offer substantial efficiency improvements over the state of the art, by as much as a factor of 4,  especially when used on real-world graphs that exhibit social network properties.
\end{itemize} 
\noindent  



\smallskip
\noindent{\bf Paper organization.} The rest of this paper is organized as follows. 
We provide preliminary definitions and a description of the baseline enumeration algorithm in \cref{sec:preliminaries}. The pruning strategies are introduced and theoretically analyzed in~\cref{sec:enumeration}. Our experimental setup and the main outcomes of our analysis are discussed in~\cref{sec:setup} and~\cref{sec:experiments}, respectively. We summarize our results and outline open problems in \cref{sec:conclusions}.

\section{Preliminaries}
\label{sec:preliminaries}

In this section we first introduce preliminary definitions and properties of $\ell$-isolation (\cref{ss:l-isolation}), and then describe the maximal clique enumeration algorithm underlying our approach (\cref{ss:mce}).
Throughout the paper we consider finite simple graphs. For a graph $G$, let $V=V(G)$ and $E=E(G)$ denote its vertex and edge sets, with $n = |V|$ and $m = |E|$ their respective sizes. For a vertex $v$, let $\delta(v)$ denote its degree and $N(v) = \{u \mid (u,v) \in E\}$ its set of neighbors. Clearly, $\delta(v)=|N(v)|$. 
For any vertex subset $I \subseteq V$, let $G[I]$ denote the subgraph of $G$ induced by $I$, with vertex set $V[I] = I$ and edge set $E[I] = \{(u,v) \in E \mid u,v \in I\}$ consisting of edges with both endpoints in $I$.
For a vertex $v \in I$, let $\delta_e(v,I)$ denote its {\em external degree with respect to $I$}, i.e.,
$\delta_e(v,I) = |\{(u,v) \in E \mid u \notin I\}|$.
The {\em external degree of $I$}, denoted as $\delta_e(I)$, is then the number of edges with one endpoint in $I$ and the other in $V \setminus I$, i.e.,

$$
\delta_e(I) = \sum_{v\in I}\delta_e(v,I) =|\{(u,v) \in E \mid v\in I~\mbox{and}~u \in V\setminus I\}|
$$
Moreover, given $I,L \subseteq V$ with $I \cap L = \varnothing$, the {\em external degree of $I$ neglecting $L$}, denoted as $\delta_e(I,L)$, is the number of edges with one endpoint in $I$ and the other in $V \setminus (I \cup L)$.

\subsection{$\ell$-isolated cliques}
\label{ss:l-isolation}

A {\em clique} is a complete subgraph of $G$, and is {\em maximal} if it is not contained in any larger clique. The concept of {$\ell$-isolation} has been first introduced in~\cite{DBLP:journals/talg/ItoI09}:
for any integer value $\ell>0$, a set $I \subseteq V$ of vertices $G$ of size $|I|=k$ is {\em $\ell$-isolated} in $G$ if it has less than $\ell \cdot k$ outgoing edges, i.e., $\delta_e(I) < \ell \cdot k$.
In particular, we are interested into maximal cliques that are also $\ell$-isolated ($\ell$-isolated maximal cliques).
We notice that $I$ may contain vertices with more than $\ell$ external edges: the property of $\ell$-isolation holds on average over the vertices of $I$, implying that the average external degree of vertices in $I$ is less than $\ell$\footnote{A stronger notion of {\em max-$\ell$-isolation} requires every $v \in I$ to have fewer than $\ell$ external edges, but will not be considered in this paper. Algorithmic techniques similar to those presented here could nevertheless be applied.}. 


%
We remark that the property of $\ell$-isolation is monotone. Namely, if $\ell_1>\ell_2>0$, then any $\ell_2$-isolated set is also $\ell_1$-isolated.
This follows immediately from              
the definition of $\ell$-isolation, since for an $\ell_2$-isolated set $I$ we have $\delta_e(I) <  \ell_2 \cdot k<\ell_1 \cdot k$ when $\ell_1>\ell_2$. Hence, when searching for $\ell$-isolated maximal cliques, the larger is the value of $\ell$, the weaker is the isolation
of the cliques relative to the rest of the graph,
and the larger is the number of $\ell$-isolated cliques in the graph. 
If $\ell$ is high enough, the set of $\ell$-isolated maximal cliques coincides with the set of all maximal cliques of $G$.

\subsection{Maximal clique enumeration with pivoting}
\label{ss:mce}

Our approach hinges upon the recursive backtracking procedure for maximal clique enumeration outlined in~\cref{alg:bk}. 
The procedure extends the well-known algorithm by Bron and Kerbosch~\cite{DBLP:journals/cacm/BronK73} with a pivoting strategy introduced in~\cite{DBLP:conf/cocoon/TomitaTT04,DBLP:journals/tcs/TomitaTT06},
aimed at reducing the time spent on non-maximal cliques.
Three sets of vertices are maintained by  procedure {\small\sc ProcMCE}: $C$ represents the clique under construction, $P$ contains candidate vertices that can potentially extend $C$, and $X$ holds vertices that should no longer be considered, as they are already part of maximal cliques obtained from $C$ or its subsets.
%
As an invariant property, vertices in $P$ and $X$ are adjacent to all vertices in $C$. The  idea is to add at each step a vertex, selected from the candidates in $P$, expanding the current clique $C$ until it becomes maximal. At the first invocation, $C=X=\varnothing$ and $P=V$.
In procedure {\small\sc ProcMCE}, base cases occur when $P$ is empty: then $C$ cannot be further expanded and is returned only if $X=\varnothing$, ensuring its maximality.
Otherwise, a pivot vertex $v_p$ is chosen so as to minimize recursive calls: {\small\sc ProcMCE} is called on each $v \in P$ not adjacent to $v_p$ (possibly including $v_p$). Correctness follows since, for each neighbor $w$ of $v_p$, any maximal clique containing $C \cup \{w\}$ will still be reported by recursing on either $C \cup \{v_p\}$ or $C \cup \{u\}$, where $u \in P$ is a non-neighbor of $v_p$. 
Recursive calls update $(C,P,X)$ to $(C \cup \{v\},\, P \cap N(v),\, X \cap N(v))$, ensuring vertices in $P\cup X$ to remain adjacent to $C$. Afterwards, $v$ is removed from further consideration.


\begin{algorithm}[t]
\caption{TomitaTanakaTakahashi}
\label{alg:bk}
\begin{algorithmic}[1]
\State \textbf{Input:} Graph $G = (V, E)$
\State \textbf{Output:} The set $\mathcal{M}(G)$ of all maximal cliques of $G$\\
{\small\sc ProcMCE}$(\varnothing,V,\varnothing)$\\
\Procedure{ProcMCE}{$C, P, X$}
    \If{$P = \varnothing$ and $X = \varnothing$} 
    \State {\bf yield} $C$ \label{alg:bk:return}
    \State \Return
    \EndIf
    \State Choose a pivot vertex $v_p$ in $P \cup X$ with highest $|N(v_p) \cap P|$
    \ForAll{$v \in P \setminus N(v_p)$}
        \State \Call{ProcMCE}{$C \cup \{v\}, P \cap N(v), X \cap N(v)$} \label{alg:bk:recursive-call}
        \State $P \gets P \setminus \{v\}$
        \State $X \gets X \cup \{v\}$
    \EndFor
\EndProcedure
\end{algorithmic}
\end{algorithm}

When invoked on a graph $G$, \cref{alg:bk} implicitly explores a search tree $\mathcal{T}_G$. Each node of $\mathcal{T}_G$ corresponds to a recursive call and is uniquely associated with a triple $(C,P,X)$. The leaves of $\mathcal{T}_G$ are nodes of the form $(C, \varnothing, X)$. If $X=\varnothing$, the leaf corresponds to a maximal clique that is enumerated. Otherwise, the maximality condition is not satisfied (vertices in $X$ could be added to $C$), the clique is discarded, but no recursive call needs to be done.

\begin{figure}[bt]
    \captionsetup[subfigure]{justification=centering}
    \centering
    \begin{subfigure}{0.24\textwidth}
    \centering
    \includegraphics[page=1, trim=100 260 430 115, clip, width=\textwidth]{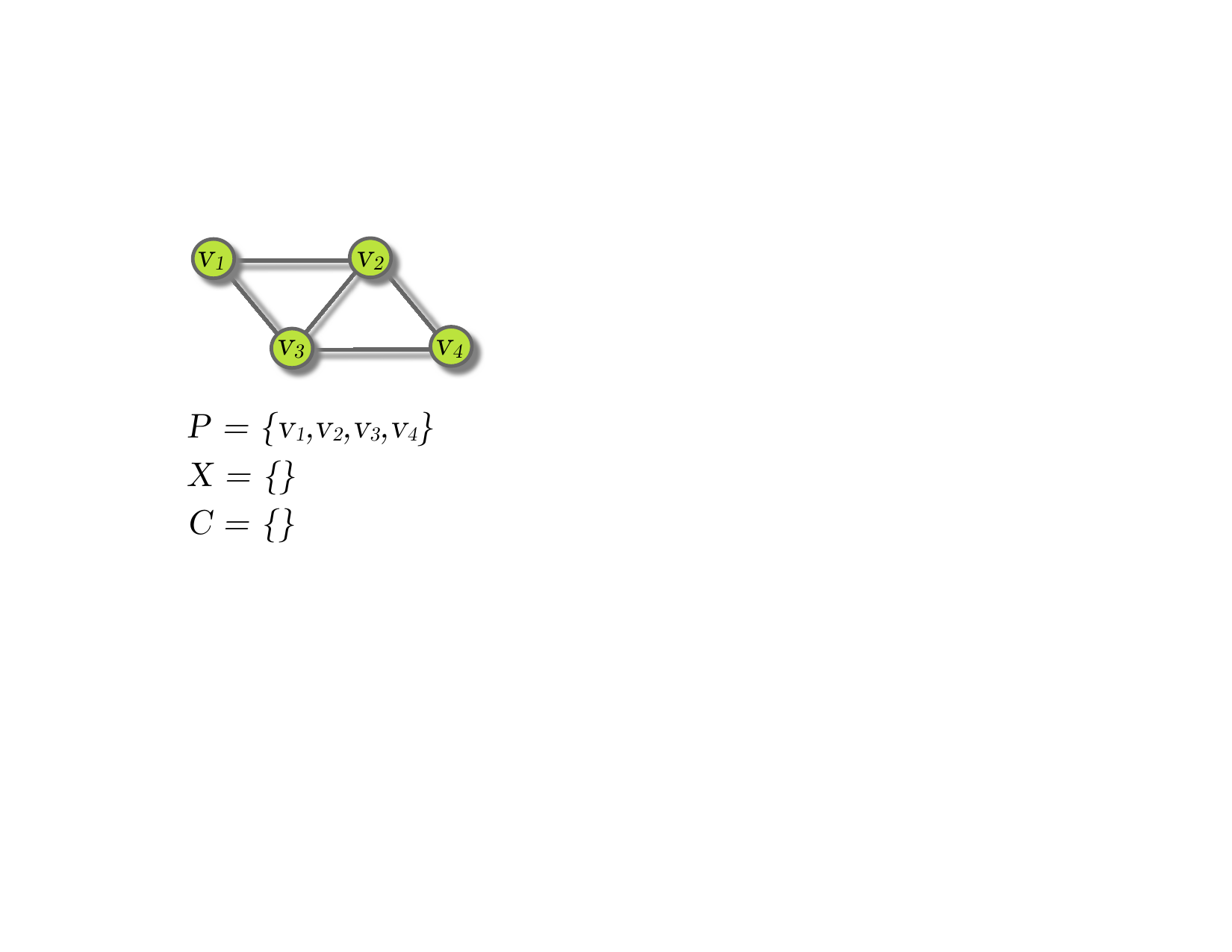}
    \subcaption{}\label{fig:tomita-a}
    \end{subfigure}
    \hfil
    \begin{subfigure}{0.24\textwidth}
    \centering
    \includegraphics[page=2, trim=100 260 430 115, clip, width=\textwidth]{figures/tomita-example-3.pdf}
    \subcaption{}\label{fig:tomita-b}
    \end{subfigure}
    \hfil
    \begin{subfigure}{0.24\textwidth}
    \centering
    \includegraphics[page=3, trim=100 260 430 115, clip, width=\textwidth]{figures/tomita-example-3.pdf}
    \subcaption{}\label{fig:tomita-c}
    \end{subfigure}
    \hfil
    \begin{subfigure}{0.24\textwidth}
    \centering
    \includegraphics[page=4, trim=100 260 430 115, clip, width=\textwidth]{figures/tomita-example-3.pdf}
    \subcaption{}\label{fig:tomita-d}
    \end{subfigure}
    \hfil
    \begin{subfigure}{0.24\textwidth}
    \centering
    \includegraphics[page=5, trim=100 260 430 115, clip, width=\textwidth]{figures/tomita-example-3.pdf}
    \subcaption{}\label{fig:tomita-e}
    \end{subfigure}
    \hfil
    \begin{subfigure}{0.24\textwidth}
    \centering
    \includegraphics[page=6, trim=100 260 430 115, clip, width=\textwidth]{figures/tomita-example-3.pdf}
    \subcaption{}\label{fig:tomita-f}
    \end{subfigure}
    \hfil
    \begin{subfigure}{0.24\textwidth}
    \centering
    \includegraphics[page=7, trim=100 260 430 115, clip, width=\textwidth]{figures/tomita-example-3.pdf}
    \subcaption{}\label{fig:tomita-g}
    \end{subfigure}
    \hfil
    \begin{subfigure}{0.24\textwidth}
    \centering
    \includegraphics[page=8, trim=100 260 430 115, clip, width=\textwidth]{figures/tomita-example-3.pdf}
    \subcaption{}\label{fig:tomita-h}
    \end{subfigure}
    \hfil
    \begin{subfigure}{0.24\textwidth}
    \centering
    \includegraphics[page=9, trim=100 260 430 115, clip, width=\textwidth]{figures/tomita-example-3.pdf}
    \subcaption{}\label{fig:tomita-i}
    \end{subfigure}
    \hfil
    \begin{subfigure}{0.24\textwidth}
    \centering
    \includegraphics[page=10, trim=100 260 430 115, clip, width=\textwidth]{figures/tomita-example-3.pdf}
    \subcaption{}\label{fig:tomita-j}
    \end{subfigure}
    \hfil
    \caption{A running example of \cref{alg:bk} on a small graph. Green, blue, and red vertices belong to $P$, $C$, and $X$, respectively. The current pivot vertex $v_p$ is circled with a dotted curve. The neighbourhood of $v_p$ is circled with a dashed curve, the current vertex $v$ is circled with a red solid curve.}
    \label{fig:tomita}
\end{figure}

A running example of \cref{alg:bk} on a small graph is depicted in \cref{fig:tomita}. When the first call of the procedure \texttt{ProcMCE} is launched, all the vertices are in $P$, while $X = C = \varnothing$ (\cref{fig:tomita-b}). The pivot vertex is $v_p = v_2$ (line 9 of \cref{alg:bk}) and the procedure iterates on vertex $v_2$ only, since all other vertices in $P$ are neighbors of $v_p$ (line 10 of \cref{alg:bk}). Hence, in the second call of \texttt{ProcMCE} (\cref{fig:tomita-c}) vertex $v_2$ has been moved to $C$, $v_p = v_3$, and line 10 of \cref{alg:bk} only iterates on $v_3$. Instead, Call~3 iterates on both $v_1$ and $v_4$: the iteration on $v_1$ (\cref{fig:tomita-d}) launches Call~4 (\cref{fig:tomita-e}) while the iteration on $v_4$ (\cref{fig:tomita-f}) launches Call~5 (\cref{fig:tomita-g}). Call~4 and Call~5 detect each a maximal clique, since both $P$ and $X$ are empty (line 6 of \cref{alg:bk}). When Call~5 terminates, the control returns to Call~3 (\cref{fig:tomita-h}) and then to Call~2 (\cref{fig:tomita-i}) and Call~1 (\cref{fig:tomita-j}), terminating the computation.

 \cref{alg:bk} requires 
$\bigO(3^{n/3})$ time~\cite{DBLP:journals/tcs/TomitaTT06}. While being optimal in the worst-case (matching the number of maximal cliques in Moon-Moser graphs \cite{Moon1965OnCI}), it has $\Omega(3^{n/6})$ delay and no polynomial-delay variant can exist, even with different pivoting, unless $\P = \NP$~\cite{DBLP:journals/tcs/ConteT22}. Despite these worst-case bounds, Bron-Kerbosch-based approaches are known to perform very efficiently in practice~\cite{DBLP:journals/jea/EppsteinLS13,DBLP:journals/tcs/TomitaTT06}.

\section{Pruning for $\ell$-isolated maximal clique enumeration}
\label{sec:enumeration}

In this section we show how to adapt \cref{alg:bk} to enumerate only $\ell$-isolated maximal cliques, introducing different pruning strategies to cut uninteresting  branches of the search tree. We also analyze the pruning capabilities and computational cost of each strategy, showing an inverse relation that suggests a natural order in which the heuristics may be applied in combination.


\smallskip
\noindent{\bf Growing $\ell$-isolated cliques.} We recall from \cref{ss:mce} that in \cref{alg:bk}, during the enumeration, $C$ and $P$ are the current candidate clique and the perspective set, respectively. Let $T$ be any maximal clique in the subgraph $G[P]$ induced by $P$. Thanks to the invariant property that $P$ is always fully connected to $C$ during the execution, it is not difficult to see that $C \cup T$ is a clique in $G$, though not necessarily $\ell$-isolated. Our approach, when processing a node $r$ of the search tree $\mathcal{T}_G$ associated with a triple $(C,P,X)$, is to avoid recursive calls whenever the subtree rooted at $r$ cannot yield $\ell$-isolated maximal cliques.

First, we introduce a necessary condition for a clique grown from $C$ and $P$ to be $\ell$-isolated.
Let $T$ be any maximal clique in $G[P]$ and let $t$ be its size. By the invariant property maintained by \cref{alg:bk}, $C \cup T$ is a clique in $G$. Such a clique has at least $\delta_e(C,P) + |C|(|P| - t)$ external edges: each of the $|P| - t$ vertices not included in $C$ is indeed connected through an external edge with all the vertices in $C$. 
By definition, the number of external edges in an $\ell$-isolated clique must be strictly less than $\ell$ times the clique size. Hence, if
\begin{equation}
\label{eq:necessary}
    \delta_e(C,P) + |C|(|P| - t)\geq\ell(|C|+t)
\end{equation}
then $C \cup T$ is not $\ell$-isolated. We now introduce a preliminary result that will be useful to prove the correctness of our pruning strategies. 


\begin{lemma}\label{le:pruning-preliminary}
Let $r$ be a node in $\mathcal{T}_G$ associated to a candidate set $C$ and a perspective set $P$. Let $P_1$ and $P_2$ be any two subsets of $P$ such that $|P_1|\leq|P_2|$. 
If $\delta_e(C,P) + |C|(|P| - |P_2|)\geq\ell(|C|+|P_2|)$, then both $C\cup P_2$ and $C\cup P_1$ are not $\ell$-isolated.
\end{lemma}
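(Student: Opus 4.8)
The plan is to reduce the statement to a single monotonicity observation about the left-hand side of the hypothesis, resting on a clean lower bound for the external degree of any clique of the form $C \cup P'$ with $P' \subseteq P$. First I would isolate this lower bound as a self-contained claim, then apply it to $P_2$ directly, and finally transfer the conclusion to $P_1$ by showing that shrinking the subset can only make the isolation condition harder to satisfy.

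The technical foundation is the following bound: for any $P' \subseteq P$ with $|P'| = s$, one has $\delta_e(C \cup P') \geq \delta_e(C,P) + |C|(|P| - s)$. To prove it I would exhibit two disjoint families of edges that all cross the boundary of $C \cup P'$. The first consists of edges from $C$ to $V \setminus (C \cup P)$; by definition these are exactly the $\delta_e(C,P)$ edges, and since their far endpoints lie outside $C \cup P \supseteq C \cup P'$, every such edge leaves $C \cup P'$. The second consists of edges from $C$ to $P \setminus P'$: by the adjacency invariant of \cref{alg:bk} every vertex of $C$ is adjacent to every vertex of $P$, so there are exactly $|C|\cdot|P \setminus P'| = |C|(|P| - s)$ of them, and each leaves $C \cup P'$ since $P \setminus P'$ is disjoint from $C \cup P'$. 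The two families are disjoint because their far endpoints lie in the disjoint sets $V \setminus (C \cup P)$ and $P \setminus P'$; any additional crossing edges (for instance those originating in $P'$) only increase $\delta_e(C \cup P')$ and so do not spoil the bound. Applying this at $P' = P_2$, the hypothesis $\delta_e(C,P) + |C|(|P| - |P_2|) \geq \ell(|C| + |P_2|)$ yields $\delta_e(C \cup P_2) \geq \ell\,|C \cup P_2|$, so $C \cup P_2$ is not $\ell$-isolated.

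To handle $P_1$, I would define $f(s) = \delta_e(C,P) + |C|(|P| - s) - \ell(|C| + s)$ and note that it is affine in $s$ with slope $-(|C| + \ell) < 0$, hence strictly decreasing. Since $|P_1| \leq |P_2|$ and $f(|P_2|) \geq 0$ by hypothesis, we get $f(|P_1|) \geq 0$, i.e.\ $\delta_e(C,P) + |C|(|P| - |P_1|) \geq \ell(|C| + |P_1|)$; combining with the lower bound at $P' = P_1$ gives $\delta_e(C \cup P_1) \geq \ell\,|C \cup P_1|$, so $C \cup P_1$ is not $\ell$-isolated either.

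The only delicate point, and the step I would take most care over, is the lower bound itself: one must justify that the two edge families are genuinely disjoint and that the count $|C|(|P|-s)$ is exact, both of which hinge on the invariant that $C$ is completely joined to $P$. The monotonicity in the final step is conceptually transparent — moving a vertex out of $P'$ simultaneously adds $|C|$ crossing edges and lowers the isolation budget by $\ell$ — and the slope computation makes it immediate, so it poses no real obstacle once the bound is in place.
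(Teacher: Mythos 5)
Your proposal is correct and takes essentially the same route as the paper: your edge-counting lower bound $\delta_e(C\cup P')\geq \delta_e(C,P)+|C|(|P|-|P'|)$ is precisely the observation behind \cref{eq:necessary}, and your affine-function monotonicity step is just a repackaging of the paper's chain of inequalities $\delta_e(C,P) + |C|(|P| - |P_1|) \geq \delta_e(C,P) + |C|(|P| - |P_2|) \geq \ell (|C| + |P_2|) \geq \ell (|C| + |P_1|)$. The only difference is presentational: you spell out the disjointness of the two crossing-edge families, which the paper leaves implicit by citing its earlier discussion of \cref{eq:necessary}.
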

\begin{proof}
%
%
As observed when we introduced \cref{eq:necessary}, if $\delta_e(C,P) + |C|(|P| - |P_2|)\geq\ell(|C|+|P_2|)$, then $C\cup P_2$ is not $\ell$-isolated.
The relationship $|P_1|\leq |P_2|$ yields the following chain of inequalities: 
$\delta_e(C,P) + |C|(|P| - |P_1|) \geq \delta_e(C,P) + |C|(|P| - |P_2|) \geq \ell (|C| + |P_2|) \geq \ell (|C| + |P_1|)$.
Hence, $C\cup P_1$ is also not $\ell$-isolated.
\end{proof}

\smallskip
\noindent{\bf Pruning by maximum cliques.} The size of the largest clique in the subgraph $G[P]$ induced by $P$, i.e., its clique number $\omega(G[P])$, can be used to prune subtrees of $\mathcal{T}_G$.
Namely, we prove in~\cref{le:pruning-omega} that it is correct to prune the subtree rooted at a node labelled $(C,P,X)$ whenever 
\begin{equation}
\label{eq:omega-equation}
    \delta_e(C,P) + |C| |P| - \ell |C| \geq \omega(G[P])\cdot(\ell + |C|)
\end{equation}

\begin{lemma}\label{le:pruning-omega}
Let $r$ be a node in $\mathcal{T}_G$ associated to a candidate set $C$ and a perspective set $P$. If $\delta_e(C,P) + |C| |P| - \ell |C| \geq \omega(G[P])\cdot(\ell + |C|)$, then the subtree of $\mathcal{T}_G$ rooted at $r$ contains no leaf corresponding to an $\ell$-isolated maximal clique.
\end{lemma}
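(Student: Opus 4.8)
The plan is to reduce this statement to \cref{le:pruning-preliminary} by instantiating its generic subset with a maximum clique of $G[P]$. First I would characterize the leaves of the subtree rooted at $r$. By the recursive structure of \cref{alg:bk}, every recursive call replaces $(C,P,X)$ with $(C\cup\{v\},\,P\cap N(v),\,X\cap N(v))$, so the candidate set only grows by absorbing vertices drawn from the current perspective set, each adjacent to all previously absorbed ones. Consequently, any clique enumerated at a leaf in the subtree rooted at $r$ has the form $C\cup S$, where $S\subseteq P$ is a clique of $G[P]$; in particular $|S|\le\omega(G[P])$.

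Next I would verify the algebraic equivalence between the hypothesis \cref{eq:omega-equation} and the hypothesis of \cref{le:pruning-preliminary} evaluated at $|P_2|=\omega(G[P])$. Expanding $\delta_e(C,P)+|C|(|P|-\omega(G[P]))\ge\ell(|C|+\omega(G[P]))$ and collecting the terms involving $\omega(G[P])$ on the right-hand side yields exactly
\[
\delta_e(C,P)+|C|\,|P|-\ell|C|\ \ge\ \omega(G[P])\cdot(\ell+|C|),
\]
so the two conditions coincide.

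Finally I would conclude. Choose $P_2\subseteq P$ to be a maximum clique of $G[P]$, so that $|P_2|=\omega(G[P])$, and let $P_1=S$ be the clique associated with an arbitrary leaf of the subtree; since $|P_1|=|S|\le\omega(G[P])=|P_2|$, the cardinality hypothesis $|P_1|\le|P_2|$ of \cref{le:pruning-preliminary} is satisfied. By the equivalence just established, \cref{eq:omega-equation} provides precisely the inequality required in the hypothesis of \cref{le:pruning-preliminary}, which therefore certifies that $C\cup P_1=C\cup S$ is not $\ell$-isolated. As $S$ ranges over all cliques enumerated in the subtree, no leaf can correspond to an $\ell$-isolated maximal clique, as claimed.

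I expect the only delicate point to be the leaf characterization, namely arguing that bounding $|S|$ by $\omega(G[P])$ is legitimate and that the external-degree lower bound underlying \cref{le:pruning-preliminary} applies to $C\cup S$ for every such $S$. This causes no real trouble, because \cref{le:pruning-preliminary} is stated for arbitrary subsets of $P$ rather than cliques, and its monotonicity in $|P_2|$ transfers the obstruction from the maximum clique down to every smaller candidate $S$; hence the reduction goes through without additional case analysis.
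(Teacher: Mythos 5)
Your proof is correct and follows essentially the same route as the paper's: both reduce the statement to \cref{le:pruning-preliminary} by taking $P_2$ to be a maximum clique of $G[P]$ (so that $|P_2|=\omega(G[P])$) and $P_1$ an arbitrary clique of $G[P]$, after rewriting the hypothesis \cref{eq:omega-equation} into the form $\delta_e(C,P)+|C|\big(|P|-\omega(G[P])\big)\geq\ell\big(|C|+\omega(G[P])\big)$ required by that lemma. Your explicit characterization of the subtree's leaves as $C\cup S$ with $S$ a clique of $G[P]$ makes precise a step the paper leaves implicit, but it does not change the substance of the argument.
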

\begin{proof}
We use~\cref{le:pruning-preliminary}, choosing $P_2$ as a maximum clique in $G[P]$, so that $|P_2| = \omega(G[P])$. Notice that the inequality $\delta_e(C,P) + |C||P| -\ell|C| \geq \omega(G[P])\cdot(\ell+|C|)$  in the statement can be rewritten as $\delta_e(C,P) + |C|\big(|P| - \omega(G[P])\big) \geq \ell \big(|C| + \omega(G[P])\big)$. When it is satisfied, by \cref{eq:necessary} the subgraph induced by $C$ extended by a maximum clique of $G[P]$ cannot be $\ell$-isolated. By~\cref{le:pruning-preliminary}, if we extend $C$ by any other clique of $G[P]$, the resulting subgraph is also not $\ell$-isolated. Any clique $P_1$ of $G[P]$ is indeed smaller than or equal to a maximum clique, i.e., $|P_1|\leq|P_2|=\omega(G[P])$ and thus~\cref{le:pruning-preliminary} can be applied.
\end{proof}

\smallskip
\noindent{\bf Pruning by upper bounding the maximum clique size.} \cref{le:pruning-omega} implies that a check on $\omega(G[P])$ would be sufficient to prune at node~$r$.  Unfortunately, computing $\omega(G[P])$ is difficult --- max-clique being \NP-hard. Hence, we exploit upper bounds to $\omega(G[P])$ to guide pruning, inspired by the approach suggested in~\cite{DBLP:conf/kdd/WangCF13} for the computation of $\tau$-visible summaries.
We will prove in~\cref{th:pruning-avg} that this maintains correctness with respect to our $\ell$-isolation problem.  
In particular, $\omega(G[P])$ cannot be larger than:  


\begin{enumerate}
    \item the number $|P|$ of vertices in $G[P]$;
    \item the maximum degree of $G[P]$, denoted by $\Delta(G[P])$, augmented by $1$;
    \item the maximum value of $k$ such that $G[P]$ contains at least $k$ vertices with degree $\geq k-1$, denoted by $\tau(G[P])$;
    \item the degeneracy of $G[P]$, denoted by $\kappa(G[P])$, augmented by $1$. The degeneracy is the largest value of $k$ for which $G[P]$ has a $k$-core (i.e., a maximal subgraph in which each vertex has degree at least $k$).
\end{enumerate}


\noindent The first upper bound $\omega(G[P])\leq |P|$ is the weakest, while the fourth one $\omega(G[P])\leq \kappa(G[P])$ is the strictest. In particular, the following chain of inequalities holds:
\begin{equation}
\label{eq:upper-bounds}\omega(G[P])\leq\kappa(G[P]) + 1\leq \tau(G[P])\leq \Delta(G[P])+1 \leq |P|\end{equation}
This can be intuitively explained as follows. The fact that $\omega(G[P])\leq \kappa(G[P]) + 1$ is well-known in the literature: in every clique of size $t$, every vertex has indeed degree $t-1$ and is thus part of a $(t-1)$-core, proving that the degeneracy is at least $t-1$. The value $\tau(G[P])$ represents a relaxed version of degeneracy, and thus $\kappa(G[P]) + 1\leq \tau(G[P])$. Indeed, the degeneracy $\kappa(G[P])$ can be computed by repeatedly removing a vertex of minimum degree from the graph, and updating the degree of remaining vertices along the way. Its relaxation $\tau(G[P])$ can be instead obtained by removing vertices based on their original degrees, without updates. Hence, it may be the case that more vertices satisfy the degree condition at point 3. Consider, for instance, a complete binary tree on four levels: $\tau(G[P])=4$, as the maximum degree 3 is attained by all the $6$ vertices in the two internal layers, while $\kappa(G[P]) + 1=2$, as the degeneracy of a tree is~$1$. Finally, it can be easily seen that $\tau(G[P])\leq \Delta(G[P])+1$, because we cannot have $k$ vertices of degree larger than or equal to $k-1$ if $k\geq \Delta(G[P])+2$.

We use the upper bounds on $\omega(G[P])$ to obtain lower bounds on the total number of external edges that any maximal clique $T$, grown from the current clique $C$ and the candidate set $P$, can have. Namely, our algorithm prunes the computation at node $(C,P,X)$ whenever 
\begin{equation}
\label{eq:upper-bound-pruning}
    \delta_e(C,P) + |C| |P| - \ell |C| \geq \overline{\omega}(G[P])\cdot(\ell + |C|)
\end{equation}
where $\overline{\omega}(G[P])$ can be any of the quantities involved in~\cref{eq:upper-bounds} (i.e., either $\omega(G[P])$ or any of its upper bounds). 
In the following we discuss the correctness of the pruning condition expressed by~\cref{eq:upper-bound-pruning}, proving that it is safe to prune using upper bounds $\overline{\omega}$ on $\omega$, instead of $\omega$ itself. 


\begin{theorem}\label{th:pruning-avg}
    Given a graph $G$ and a positive integer $\ell$, let $\mathcal{T}_G$ be the search tree of \cref{alg:bk} executed on $G$, let $r$ be a node in $\mathcal{T}_G$ associated to a candidate set $C$ and a perspective set $P$, and let $\overline{\omega}(G[P])$ be any upper bound on $\omega(G[P])$. If $\delta_e(C,P) + |C| |P| - \ell |C| \geq \overline{\omega}(G[P])\cdot(\ell + |C|)$, then the subtree of $\mathcal{T}_G$ rooted at $r$ contains no leaf corresponding to an $\ell$-isolated maximal clique.
\end{theorem}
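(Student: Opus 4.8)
The plan is to reduce \cref{th:pruning-avg} directly to \cref{le:pruning-omega}, exploiting the fact that the right-hand side of the pruning inequality is monotone increasing in the quantity that plays the role of $\omega(G[P])$. Since $\overline{\omega}(G[P])$ is by definition an upper bound on the true clique number, we have $\overline{\omega}(G[P]) \geq \omega(G[P])$, so substituting the larger value $\overline{\omega}(G[P])$ for $\omega(G[P])$ makes the pruning condition only \emph{harder} to satisfy. This is exactly what soundness demands: we never prune a node that \cref{le:pruning-omega} would not already justify pruning.

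Concretely, I would first note that $\ell + |C| > 0$, since $\ell > 0$ by assumption and $|C| \geq 0$. Multiplying $\overline{\omega}(G[P]) \geq \omega(G[P])$ by this positive factor and chaining the result with the hypothesis of the theorem yields
\[
\delta_e(C,P) + |C||P| - \ell|C| \;\geq\; \overline{\omega}(G[P])\,(\ell + |C|) \;\geq\; \omega(G[P])\,(\ell + |C|),
\]
which is precisely the hypothesis of \cref{le:pruning-omega}. Invoking that lemma then immediately gives that the subtree of $\mathcal{T}_G$ rooted at $r$ contains no leaf corresponding to an $\ell$-isolated maximal clique, completing the argument.

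I do not anticipate a genuine obstacle here, as the entire combinatorial content is already carried by \cref{le:pruning-omega}; the present statement merely records that any overestimate of $\omega(G[P])$ may be safely used in its place. The one point deserving care is the \emph{direction} of the inequality: one must check that enlarging the bound tightens rather than loosens the pruning test, so that correctness is preserved even though pruning may trigger less frequently when a looser upper bound such as $|P|$ is used in place of $\kappa(G[P])+1$.
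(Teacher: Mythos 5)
Your proof is correct, and it takes a genuinely more direct route than the paper's. Both arguments ultimately rest on \cref{le:pruning-omega}, but you get there by pure arithmetic: since $\ell + |C| > 0$ and $\overline{\omega}(G[P]) \geq \omega(G[P])$, the hypothesis stated with $\overline{\omega}(G[P])$ implies the hypothesis stated with $\omega(G[P])$, and the lemma concludes. The paper instead returns to \cref{le:pruning-preliminary}: it takes $P_1$ to be a maximum clique of $G[P]$ and, for each of the four bounds in \cref{eq:upper-bounds}, constructs an explicit witness set $P_2 \subseteq P$ whose size is claimed to equal the bound ($P$ itself; a maximum-degree vertex with its neighbors; $\tau(G[P])$ vertices of degree at least $\tau(G[P])-1$; the $\kappa(G[P])$-core), applies that lemma to obtain \cref{eq:omega-equation}, and only then invokes \cref{le:pruning-omega}. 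Your argument buys generality and robustness: it proves the statement literally as written, for an \emph{arbitrary} upper bound, with no case analysis; and it sidesteps a delicate point in the paper's instantiation, namely that a $\kappa(G[P])$-core may contain strictly more than $\kappa(G[P])+1$ vertices, in which case the pruning test does not directly yield the lemma's hypothesis for that choice of $P_2$ (one would have to take only $\kappa(G[P])+1$ vertices of the core). What the paper's construction buys in exchange is combinatorial intuition: it exhibits, for each bound, a concrete dense subset of $P$ acting as the obstruction, explaining why each quantity in \cref{eq:upper-bounds} is a legitimate stand-in for $\omega(G[P])$.
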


\begin{proof}
We exploit~\cref{le:pruning-preliminary}, choosing $P_1$ as a maximum clique in $G[P]$ and choosing $P_2$ as follows:
if the upper bound $\overline{\omega}(G[P])$ in~\cref{eq:upper-bounds} is $|P|$, then $P_2$ is $P$ itself; if instead it is $\Delta(G[P]) + 1$, then $P_2$ is formed by any vertex of maximum degree in $P$ together with its neighbors; if it is $\tau(G[P])$, then $P_2$ is formed by any $\tau(G[P])$ vertices of $P$ each having degree at least $\tau(G[P]) - 1$; and when it is $\kappa(G[P]) + 1$, then $P_2$ is the $\kappa(G[P])$-core of $P$.
Hence, $|P_1|=\omega(G[P])\leq \overline\omega(G[P])=|P_2|$. If the test in \cref{eq:upper-bound-pruning} is verified, by \cref{le:pruning-preliminary} we have that the subgraph induced by $C$ extended by a maximum clique of $G[P]$ is not $\ell$-isolated. In particular, from the proof of \cref{le:pruning-preliminary} we have $\delta_e(C,P) + |C|(|P| - \omega(G[P])) \geq \ell (|C| + \omega(G[P])$ (\cref{eq:omega-equation}). 
This in turn implies, by~\cref{le:pruning-omega}, that any other clique in $G[P]$ is not $\ell$-isolated and that pruning at $r$ does not entail discarding any $\ell$-isolated clique, thereby proving the statement.
\end{proof}

We conclude with a remark on the running time: the quantities $|P|$, $\Delta(G[P])$, $\tau(G[P])$, and $\kappa(G[P])$ involved in~\cref{eq:upper-bounds} can be computed in $O(1)$, $O(1)$, $O(|P|)$, and $O(|E(G[P])|)$ time, respectively, yielding an inverse relation between running time and pruning power.

\section{Experimental setup}\label{sec:setup}

\smallskip
\noindent{\bf Algorithms under evaluation.}
The four different heuristics discussed in \cref{sec:enumeration} are denoted in our analysis by \size, \degree, \softcore, and \degeneracy, respectively. We also consider a combined approach, called \combo, that orderly integrates the \size\ and \softcore\ heuristics: we tested many different combinations of the four basic heuristics, and \combo\ consistently proved to be the most effective one.
For the sake of comparison, we include the algorithm proposed in~\cite{DBLP:journals/tcs/HuffnerKMN09}, denoted by \hkmn.
We relied on the implementation used in~\cite{DBLP:journals/tcs/HuffnerKMN09}.
Finally, we consider a baseline algorithm, derived from \cref{alg:bk}~\cite{DBLP:conf/cocoon/TomitaTT04,DBLP:journals/tcs/TomitaTT06}
and denoted by \ttt, 
that filters each non-$\ell$-isolated maximal clique instead of reporting it. 

\begin{table}[t]
  \caption{Statistics of the real-world benchmarks. $\Delta$ is the maximum degree, and $d$ is the degeneracy.\vspace{2mm}}
  \centering
  \resizebox{\textwidth}
  {!}{

{

\begin{tabular}{l|l|l|r|r|r|r|r|r}
    \toprule
    \textbf{Graph} & \textbf{ID} & \textbf{Category} & \textbf{Vertices} & \textbf{Edges} & \textbf{Max. Cliques} & $\Delta$ & \textbf{d} \\
    \midrule
    brightkite & bk           & social & 58,228    & 214,078   & 290,004    & 1,134   & 52  \\
    livemocha  & lm           & social  & 104,103   & 2,193,083 & 3,711,569  & 2,980   & 92 \\
    gowalla  & gw           & social & 196,591   & 950,327   & 1,212,679  & 14,730  & 51 \\
    twitter  & ws           & social  &  465,017  &  833,540  & 805,170   & 677   & 30  \\
    youtube   & cy           & social & 1,134,890	& 2,987,624   & 3,265,956   &  28,754  & 51  \\
    hyves    & hy            & social  & 1,402,673 & 2,777,419 & 2,590,217  & 31,883  & 39  \\
   
    \midrule
    ca-astroph  & ap          & co-authorship    & 18,771    & 198,050   & 36,427     & 504     & 56  \\
    dblp      & cd           & co-authorship    & 317,080   & 1,049,866 & 257,551    & 343     & 113 \\
    \midrule
    cfinder-google  & gc     & hyperlink        & 15,763    & 148,585   & 75,258     & 11,401  & 102 \\
    stanford   & sf         & hyperlink        & 281,903   & 1,992,636 & 1,055,936  & 38,625  & 71  \\
    italian-cnr  & ic       & hyperlink        & 325,557   & 2,738,969 & 1,425,378  & 18,236  & 83  \\
    notre-dame  & nd        & hyperlink        & 325,729   & 1,090,108 & 495,947    & 10,721  & 155 \\
    baidu     & bar           & hyperlink        & 415,624   & 2,374,044 & 4,596,143  & 127,066 & 228 \\
    web-google  & go         & hyperlink        & 875,713   & 4,322,051 & 1,417,580  & 6,332   & 44  \\
    \midrule
    cit-hepph  & phc           & citation & 34,546  & 420,877 & 412,493  & 846   & 30  \\
    citeseer   & cs          & citation & 384,054   & 1,736,145 & 1,232,951  & 1,739   & 15  \\
    \bottomrule
    \end{tabular}
}
  }
  \label{tab:graph-stats}
\end{table}

\smallskip
\noindent{\bf{Benchmarks.}} 
We conducted experiments on a variety of real-world and synthetic networks.
Real-world networks, which are the main target of our analysis, are taken primarily from the KONECT repository ({\url{http://konect.cc}})~\cite{DBLP:conf/www/Kunegis13}. Throughout this paper we focus on a representative subset of $16$ graphs, whose main properties are summarized in~\cref{tab:graph-stats}. 
Our selection was guided by diversity criteria, taking into account {\em graph size}, {\em number of maximal cliques}, {\em maximum degree} $\Delta$, and {\em degeneracy} $d$. 
To further ensure heterogeneity, our benchmarks include different types of networks, namely {social}, {hyperlink}, {co-authorship}, and {citation} graphs.  
For the sake of completeness, the number of $\ell$-isolated maximal cliques for different values of $\ell$, as well as their percentage with respect to the total number of maximal cliques, is reported for these graphs in \cref{tab:number-of-maximal-cliques-varying-l}.

\begin{table}[htbp]
  \centering
  \caption{Evolution of the number of $\ell$-isolated maximal cliques and the percentage with respect to all maximal cliques for different values of $\ell$.\vspace{2mm}}
  \label{tab:number-of-maximal-cliques-varying-l}
  \resizebox{\textwidth}{!}{
    \begin{tabular}{l|rrrrrrrrrr}
    \toprule
    \textbf{Graph} & \multicolumn{1}{c}{$\ell=1$} & \multicolumn{1}{c}{$\ell=10$} & \multicolumn{1}{c}{$\ell=20$} & \multicolumn{1}{c}{$\ell=30$} & \multicolumn{1}{c}{$\ell=40$} & \multicolumn{1}{c}{$\ell=50$} & \multicolumn{1}{c}{$\ell=100$} & \multicolumn{1}{c}{$\ell=150$} & \multicolumn{1}{c}{$\ell=200$} & \multicolumn{1}{c}{$\ell=250$} \\
    \midrule
brightkite & 2346 & 32661 & 54118 & 67631 & 77577 & 85257 & 113328 & 181361 & 232791 & 264373 \\
           & (0.81\%) & (11.26\%) & (18.66\%) & (23.32\%) & (26.75\%) & (29.40\%) & (39.08\%) & (62.54\%) & (80.27\%) & (91.16\%) \\
\midrule
livemocha & 2 & 6619 & 32211 & 69514 & 112054 & 156226 & 361422 & 528104 & 668495 & 792579 \\
          & (0.00\%) & (0.18\%) & (0.87\%) & (1.87\%) & (3.02\%) & (4.21\%) & (9.74\%) & (14.23\%) & (18.01\%) & (21.35\%) \\
\midrule
gowalla & 3993 & 107537 & 176808 & 220693 & 252900 & 279548 & 376567 & 440791 & 511168 & 565420 \\
        & (0.33\%) & (8.87\%) & (14.58\%) & (18.20\%) & (20.85\%) & (23.05\%) & (31.05\%) & (36.35\%) & (42.15\%) & (46.63\%) \\
\midrule
twitter & 8 & 803 & 3401 & 6678 & 10471 & 15811 & 47327 & 82293 & 124235 & 603666 \\
        & (0.00\%) & (0.10\%) & (0.42\%) & (0.83\%) & (1.30\%) & (1.96\%) & (5.88\%) & (10.22\%) & (15.43\%) & (74.97\%) \\
\midrule
youtube & 41563 & 549757 & 756313 & 882061 & 975738 & 1049952 & 1291485 & 1450548 & 1566338 & 1662987 \\
        & (1.27\%) & (16.83\%) & (23.16\%) & (27.01\%) & (29.88\%) & (32.15\%) & (39.54\%) & (44.41\%) & (47.96\%) & (50.92\%) \\
\midrule
hyves & 44 & 829384 & 1218221 & 1295354 & 1394493 & 1515859 & 1920881 & 2018340 & 2113061 & 2190694 \\
      & (0.00\%) & (32.02\%) & (47.03\%) & (50.01\%) & (53.84\%) & (58.52\%) & (74.16\%) & (77.92\%) & (81.58\%) & (84.58\%) \\
\midrule
ca-astroph & 384 & 2629 & 5157 & 7462 & 9732 & 11941 & 21594 & 28811 & 32935 & 35077 \\
           & (1.05\%) & (7.22\%) & (14.16\%) & (20.48\%) & (26.72\%) & (32.78\%) & (59.28\%) & (79.09\%) & (90.41\%) & (96.29\%) \\
\midrule
dblp & 4762 & 103815 & 168752 & 205317 & 226139 & 237806 & 255468 & 257404 & 257537 & 257550 \\
     & (1.85\%) & (40.31\%) & (65.52\%) & (79.72\%) & (87.80\%) & (92.33\%) & (99.19\%) & (99.94\%) & (99.99\%) & (100.00\%) \\
\midrule
cfinder-google & 92 & 657 & 1097 & 1471 & 1597 & 1833 & 3679 & 5623 & 6858 & 17418 \\
               & (0.12\%) & (0.87\%) & (1.46\%) & (1.95\%) & (2.12\%) & (2.44\%) & (4.89\%) & (7.47\%) & (9.11\%) & (23.14\%) \\
\midrule
stanford & 1223 & 65655 & 107570 & 133984 & 153565 & 170323 & 217294 & 234531 & 249591 & 262519 \\
         & (0.12\%) & (6.22\%) & (10.19\%) & (12.69\%) & (14.54\%) & (16.13\%) & (20.58\%) & (22.21\%) & (23.64\%) & (24.86\%) \\
\midrule
italian-cnr & 1455 & 89276 & 141082 & 172788 & 193906 & 210687 & 267492 & 320474 & 369278 & 399686 \\
            & (0.10\%) & (6.26\%) & (9.90\%) & (12.12\%) & (13.60\%) & (14.78\%) & (18.77\%) & (22.48\%) & (25.91\%) & (28.04\%) \\
\midrule
notre-dame & 1292 & 108550 & 175070 & 211801 & 234167 & 251205 & 290741 & 317485 & 332918 & 345365 \\
           & (0.26\%) & (21.89\%) & (35.30\%) & (42.71\%) & (47.22\%) & (50.65\%) & (58.62\%) & (64.02\%) & (67.13\%) & (69.64\%) \\
\midrule
baidu & 22792 & 173799 & 306856 & 406504 & 466079 & 519940 & 631133 & 1031465 & 3280483 & 3301681 \\
      & (0.50\%) & (3.78\%) & (6.68\%) & (8.84\%) & (10.14\%) & (11.31\%) & (13.73\%) & (22.44\%) & (71.37\%) & (71.84\%) \\
\midrule
web-google & 8745 & 324503 & 535311 & 643214 & 709705 & 757338 & 888481 & 944254 & 975307 & 1003754 \\
           & (0.62\%) & (22.89\%) & (37.76\%) & (45.37\%) & (50.06\%) & (53.42\%) & (62.68\%) & (66.61\%) & (68.80\%) & (70.81\%) \\
\midrule
cit-hepph & 141 & 7273 & 23354 & 45525 & 71242 & 98142 & 251893 & 355771 & 390060 & 402261 \\
          & (0.03\%) & (1.76\%) & (5.66\%) & (11.04\%) & (17.27\%) & (23.79\%) & (61.07\%) & (86.25\%) & (94.56\%) & (97.52\%) \\
\midrule
citeseer & 10965 & 247124 & 478583 & 635481 & 746623 & 829906 & 1043603 & 1126876 & 1164634 & 1187575 \\
         & (0.89\%) & (20.04\%) & (38.82\%) & (51.54\%) & (60.56\%) & (67.31\%) & (84.64\%) & (91.40\%) & (94.46\%) & (96.32\%) \\
    \bottomrule
    \end{tabular}
  }
\end{table}

We also used synthetic instances generated according to two different models:

\begin{itemize}
\item Random scale-free networks generated using a preferential attachment mechanism as proposed by Barabasi and Albert~\cite{BA02}. $\textrm{BA}_{n,m}$ networks are grown by sequentially adding new nodes, each with $m$ edges, that attach preferentially to existing nodes in proportion to their degrees. We used $n = 100,000$ and $m \in \{25 , 50\}$.

\item Random networks based on the $G_{n,m,p}$ model (see~\cite{DBLP:journals/tcs/BehrischT06} and the references therein), which is also used in~\cite{DBLP:journals/tcs/HuffnerKMN09}. These networks have $n$ vertices; each of them is assigned $m$ {\em features}, each with probability $p$; and all the vertices with the same feature form a clique.
On these instances, for the sake of reproducibility, we selected parameter values similar to~\cite{DBLP:journals/tcs/HuffnerKMN09}, varying
$n$ in $[50,550]$ with step $50$, $m$ in $[5,95]$ with step $10$, $p$ in $[0.025,0.25]$ with step $0.025$, and $\ell$ in $[10,200]$ with step $10$.
\end{itemize}

\smallskip
\noindent{\bf{Performance metrics.}}
In addition to running times, we profiled the number of recursive calls and contrasted these values with the number of $\ell$-isolated cliques. The running time of each heuristic depends indeed both on its pruning capabilities and on the overhead introduced by the pruning test in~\cref{eq:upper-bound-pruning}: the more complex the pruning strategy is, the larger the test overhead.

\smallskip
\noindent{\bf{Experimental platform.}}
All the experiments were run on a laptop equipped with an Intel\textregistered{} Core\texttrademark{} i5-1135G7 CPU (4 cores, 4.2 GHz) and 8 GB of RAM. Our algorithms were implemented in OCaml (\url{https://ocaml.org}) to ensure a fair comparison with the available implementation used in~\cite{DBLP:journals/tcs/HuffnerKMN09}.

\section{Experimental results}\label{sec:experiments}

In this section we present the key findings of our experimental analysis. \cref{ss:distribution} examines the properties of $\ell$-isolated cliques in real-world benchmarks. 
We evaluate our pruning strategies in \cref{ss:pruning}, identifying the most effective ones, which are  compared against state-of-the-art methods in \cref{ss:competitors}. We complement our study with results on synthetic instances in \cref{app:experiments-synthetic}.

\subsection{On the distribution of $\ell$-isolated maximal cliques}
\label{ss:distribution}

\begin{figure}[t]
\centering\includegraphics[width=1\textwidth]{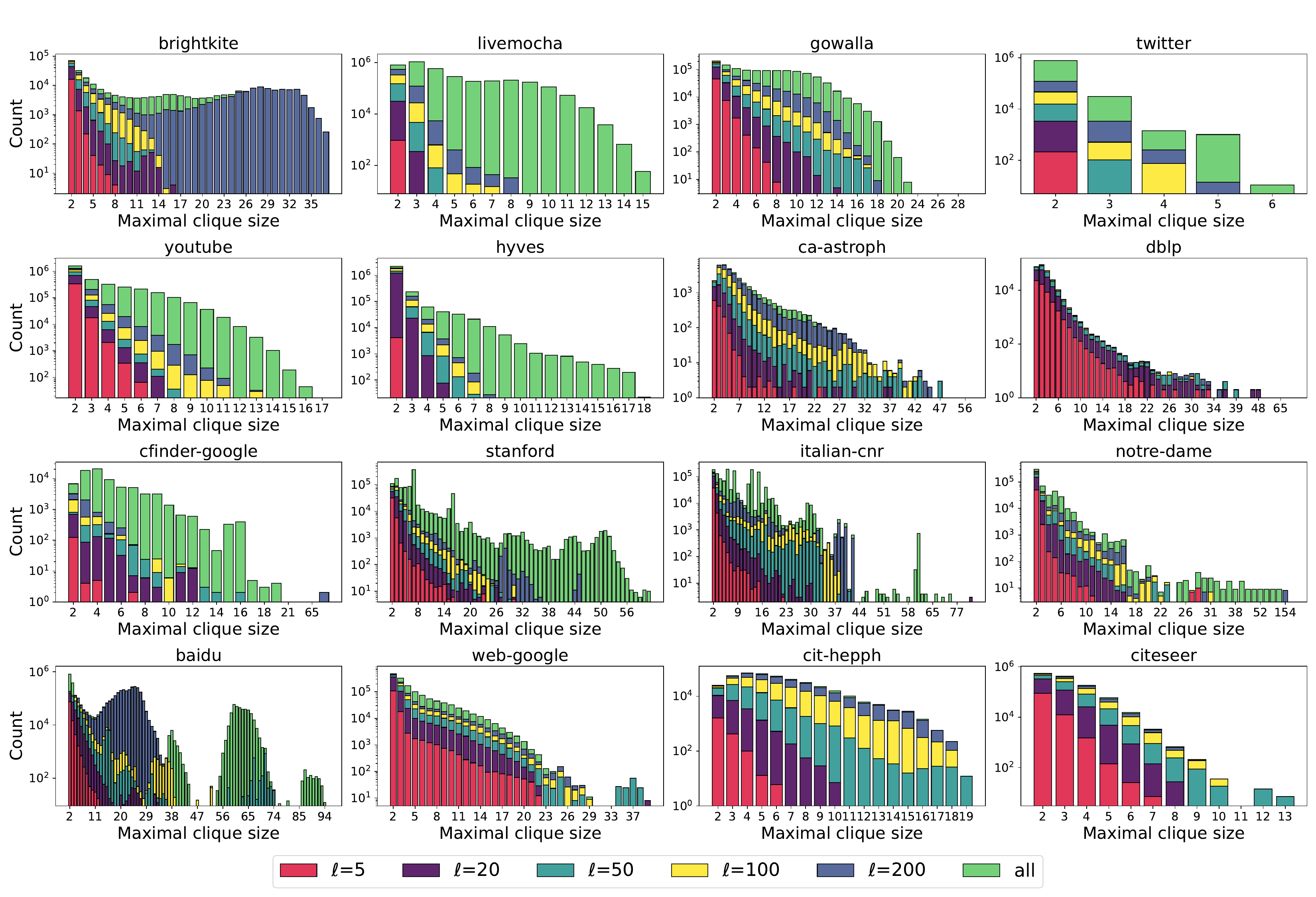}
\vspace{-6mm}
\caption{$\ell$-isolated clique size distribution for different values of $\ell$.}
\label{fig:clique-distribution-real}
\end{figure}

In \cref{fig:clique-distribution-real} we summarize the distribution of maximal cliques by size, together with their breakdown into $\ell$-isolated cliques for different values of $\ell$. 

The upper profile of the stacked bars shows the total number of maximal cliques with a given number of vertices. In all benchmarks, most cliques consist of a few vertices: the number of cliques grows quickly for small sizes, then drops sharply. The maximum clique size varies substantially across graph types (cfr. \cref{tab:graph-stats}), reflecting different connectivity patterns: 
the first six social networks
form only small cliques (typically less than 20 vertices), the two co-authorship networks and the two citation networks reach moderate sizes, while the six Web graphs contain the largest cliques, sometimes exceeding 150 vertices.
The difference can be best appreciated on graphs of different types but similar size: compare, e.g., the maximum clique size in {\small\tt gowalla}, {\small\tt dblp}, and {\small\tt notre-dame}, which have comparable numbers of edges. 

This analysis can be refined by considering the color segments within each bar of \cref{fig:clique-distribution-real}, that show how many maximal cliques are $\ell$-isolated for several values of $\ell$. 
For small values of $\ell$ (e.g., $\ell\leq 20$), the isolation requirement is very strict and only a small fraction of small cliques qualify: large cliques are typically excluded, as they tend to be embedded in denser regions of the graph with many external connections. The distributions for small $\ell$ are thus sharply truncated and dominated by cliques with just a few vertices. As $\ell$ increases, the number of $\ell$-isolated cliques grows, being the property monotone (see \cref{ss:l-isolation}).
In addition, larger values of $\ell$ allow increasingly larger cliques to appear in the summary, extending the tail of the distribution. For large thresholds, such as $\ell=200$, many more cliques are admitted: in many networks (e.g., {\small\tt brightkite}, {\small\tt web-google}, {\small\tt cit-hepph}) the profile for 200-isolated cliques approaches that of the total distribution, making the isolation criterion almost ineffective.

\subsection{Assessment of different pruning strategies}
\label{ss:pruning}

\begin{figure}[t]
    \centering
\includegraphics[width=0.49\textwidth]{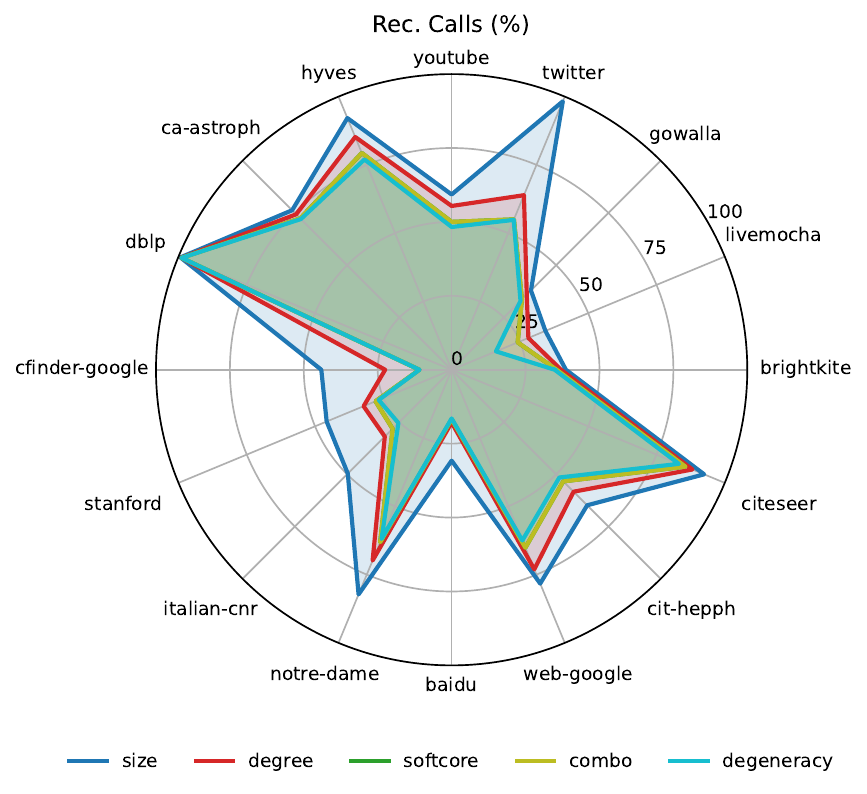}
\includegraphics[width=0.49\textwidth]{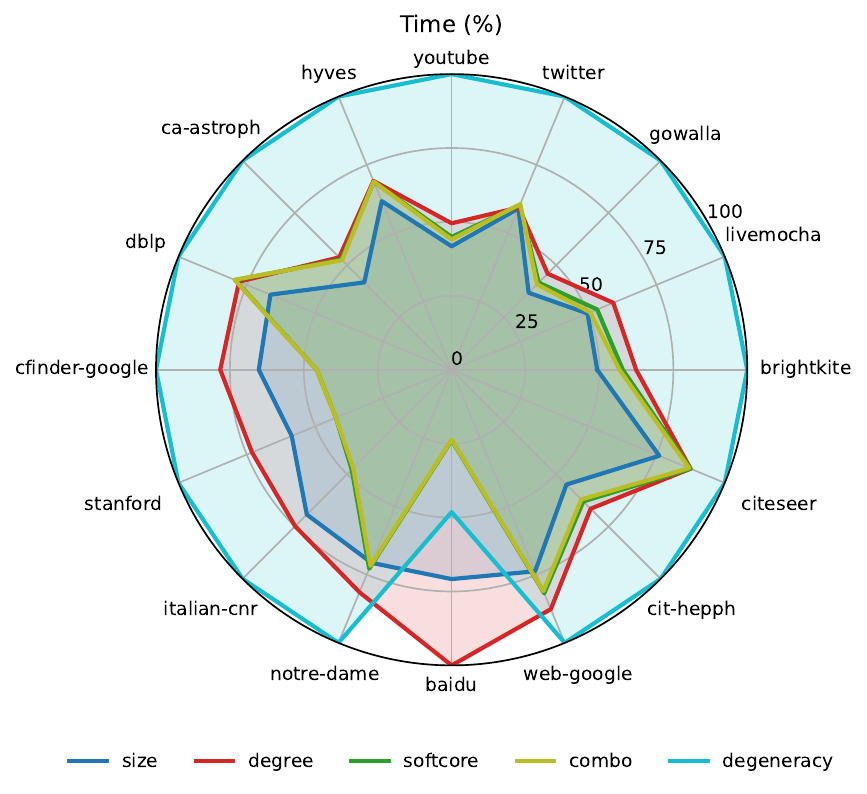}
\vspace{-1mm}
\caption{Comparison of the recursive calls (\%) and the running time (\% with respect to the slowest execution) of different pruning strategies for $\ell=50$.}
\label{fig:comparison-heuristics2}
\end{figure}

We now assess the effectiveness of different pruning strategies considering two different metrics: number of recursive calls and running time. 
Because the results are largely consistent across different values of $\ell$, \cref{fig:comparison-heuristics2} shows only the case $\ell=50$.
The left radar plot in \cref{fig:comparison-heuristics2} reports the percentage of recursive calls with respect to the number one would have without pruning. This gives a quantitative view of the pruning power of the heuristics: the larger the area, the weaker the pruning. In spite of a large variability across benchmarks, \size\ is always the weakest, as expected from the theoretical analysis, producing the largest number of calls. On some benchmarks, e.g., {\small\tt twitter}, this number can be much larger than the other strategies and very close to the baseline. As expected, \combo\ and \softcore\ overlap perfectly, since \combo\ is \size\ followed by \softcore\ whenever \size\ fails, thus inheriting the pruning power of \softcore. While \degeneracy\ consistently achieves the strongest pruning, yielding the smallest areas across all datasets, it is rather similar to \softcore\ and \combo.


\begin{figure}[thp]
    \centering
    \includegraphics[width=0.32\textwidth]{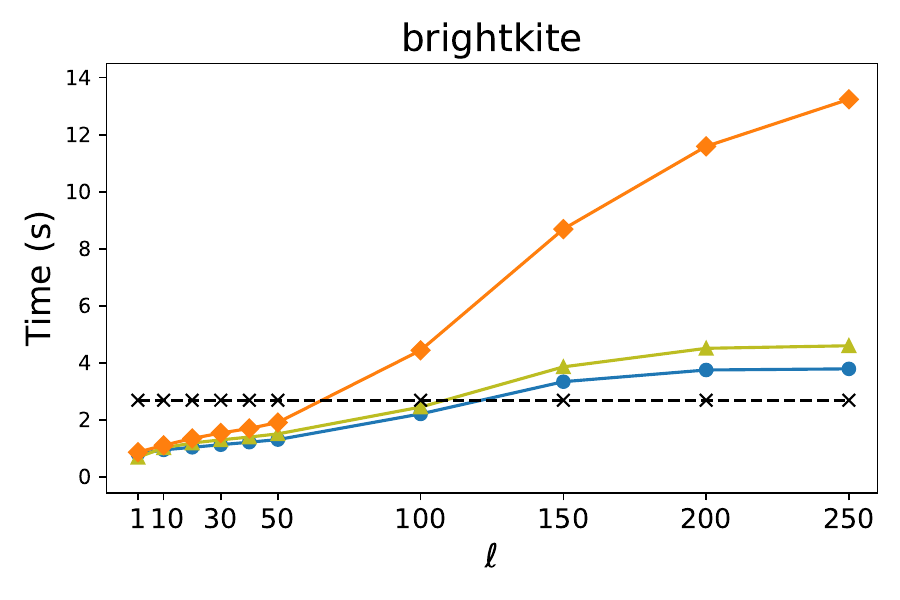}
    \includegraphics[width=0.32\textwidth]{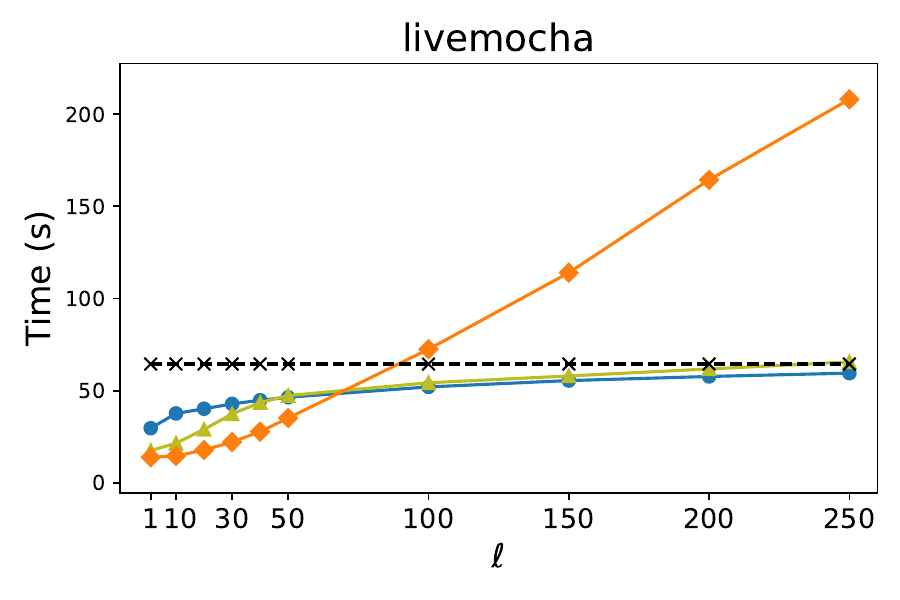}
    \includegraphics[width=0.32\textwidth]{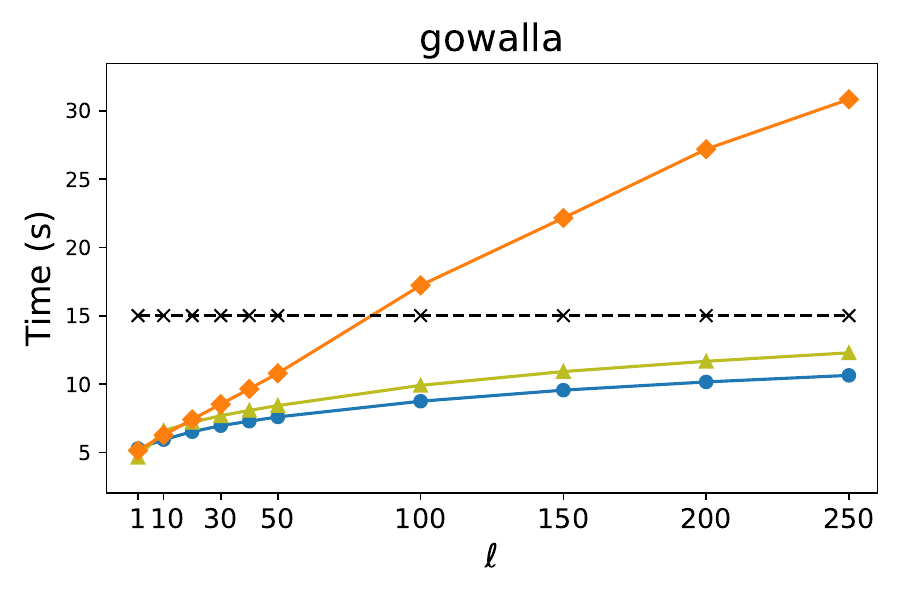}\\
    \includegraphics[width=0.32\textwidth]{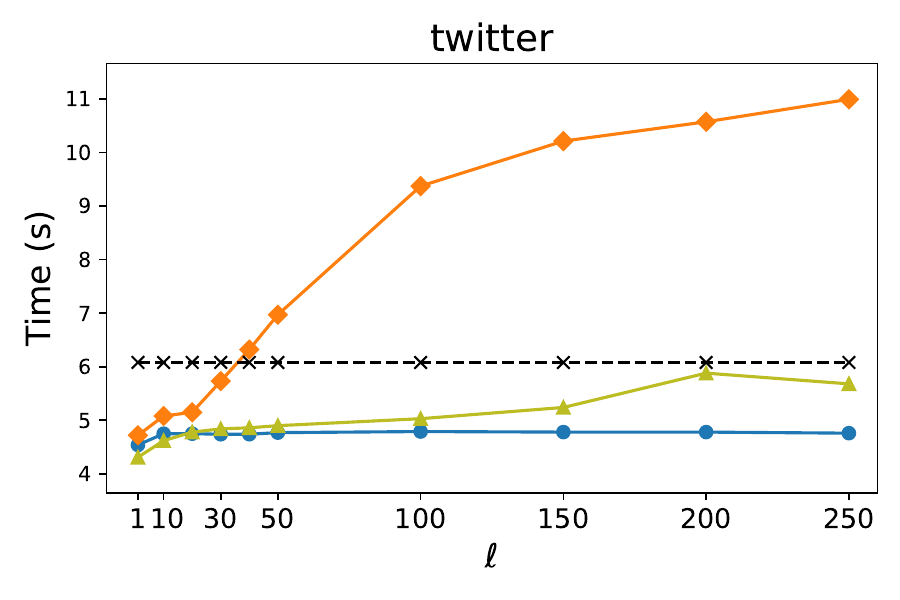}
    \includegraphics[width=0.32\textwidth]{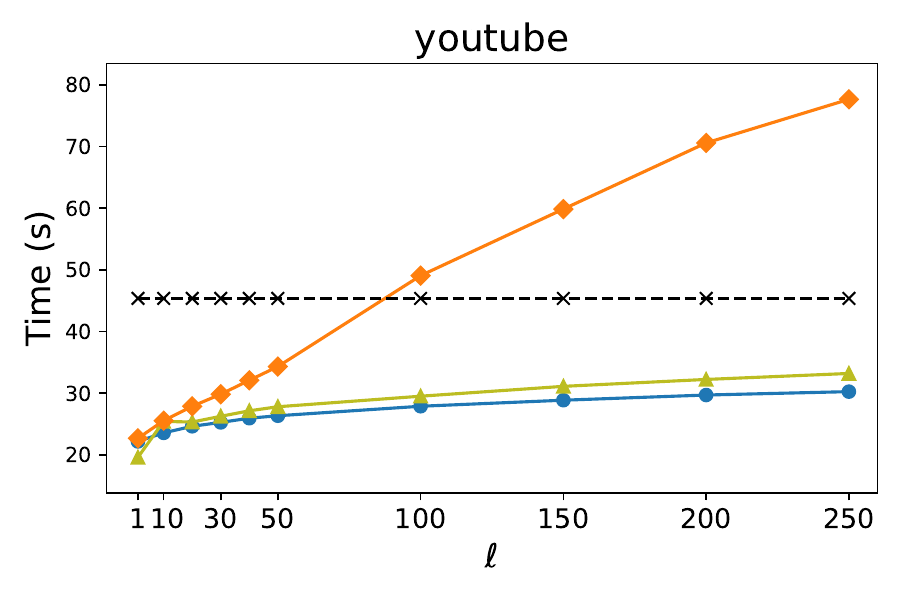}
    \includegraphics[width=0.32\textwidth]{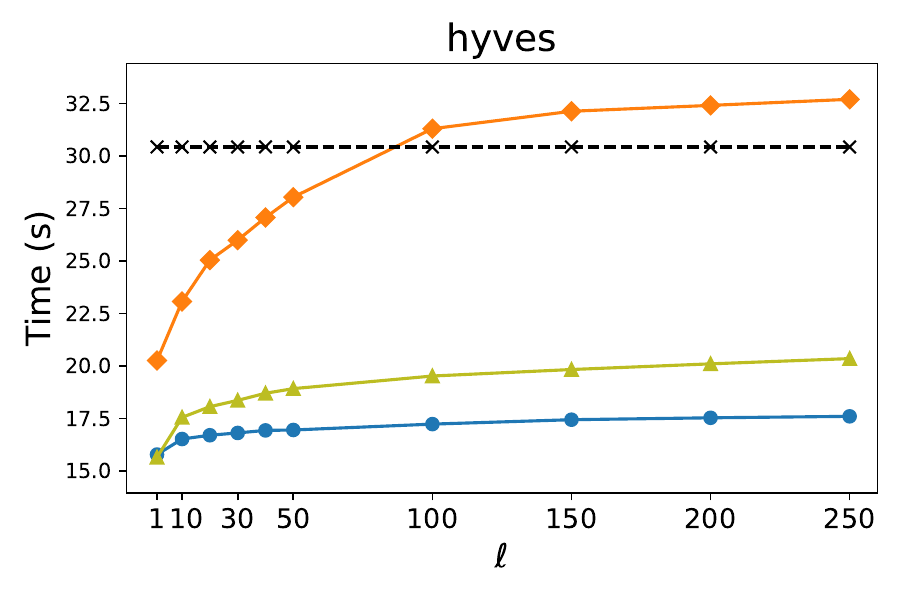}\\
    \includegraphics[width=0.32\textwidth]{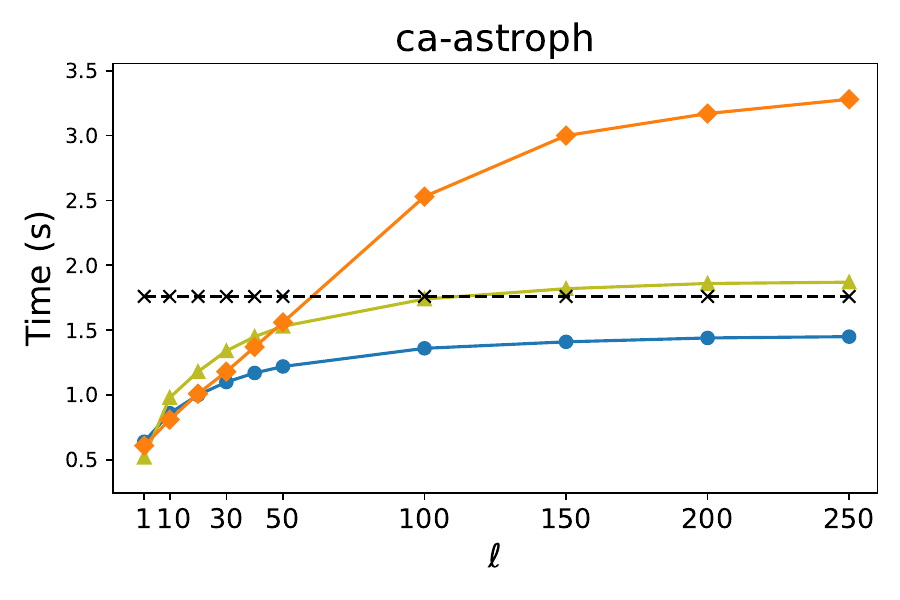}
    \includegraphics[width=0.32\textwidth]{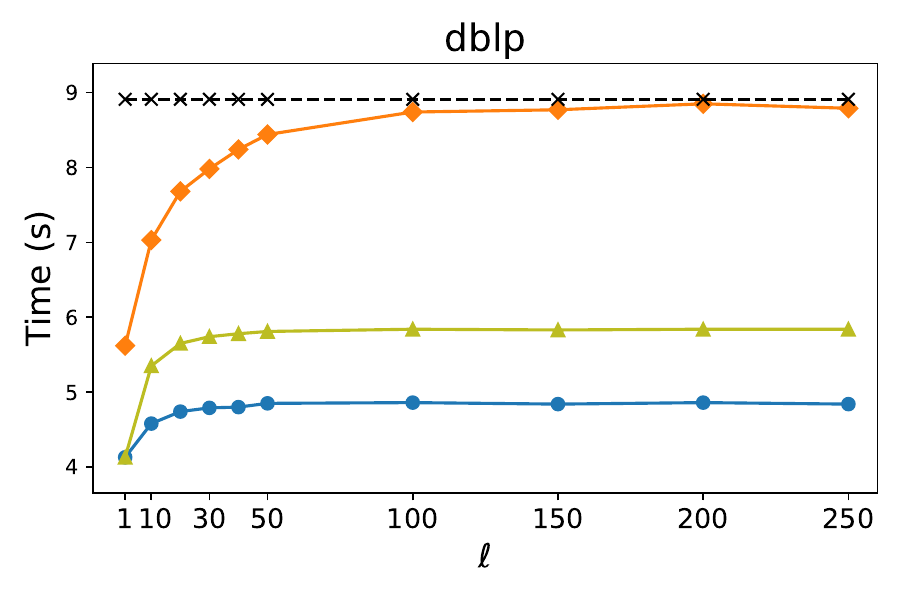} \\
    \includegraphics[width=0.32\textwidth]{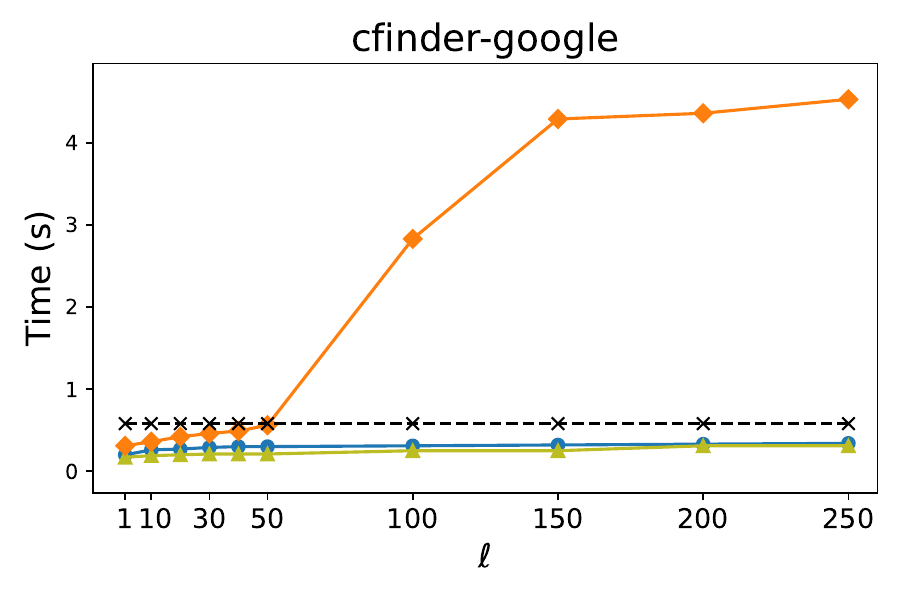}
    \includegraphics[width=0.32\textwidth]{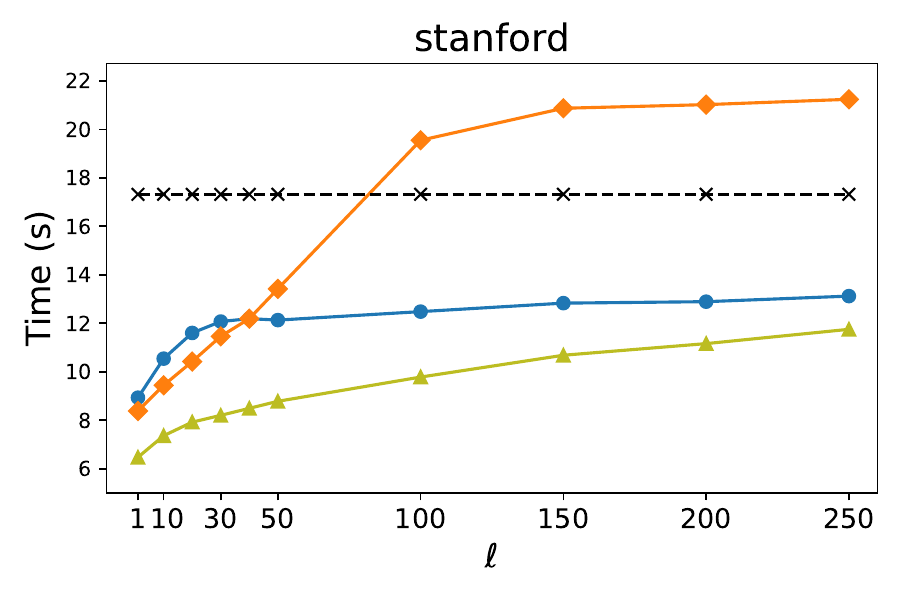}
    \includegraphics[width=0.32\textwidth]{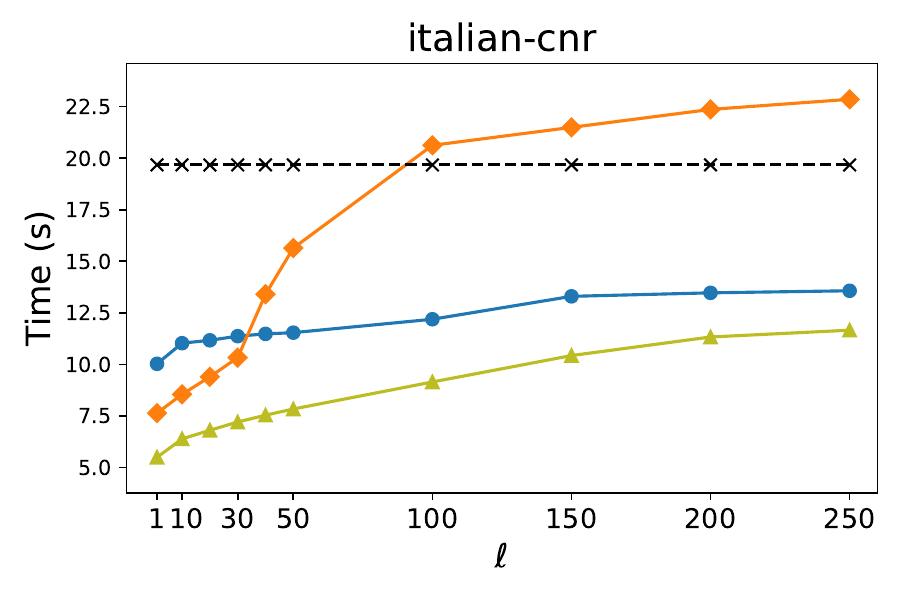} \\
    \includegraphics[width=0.32\textwidth]{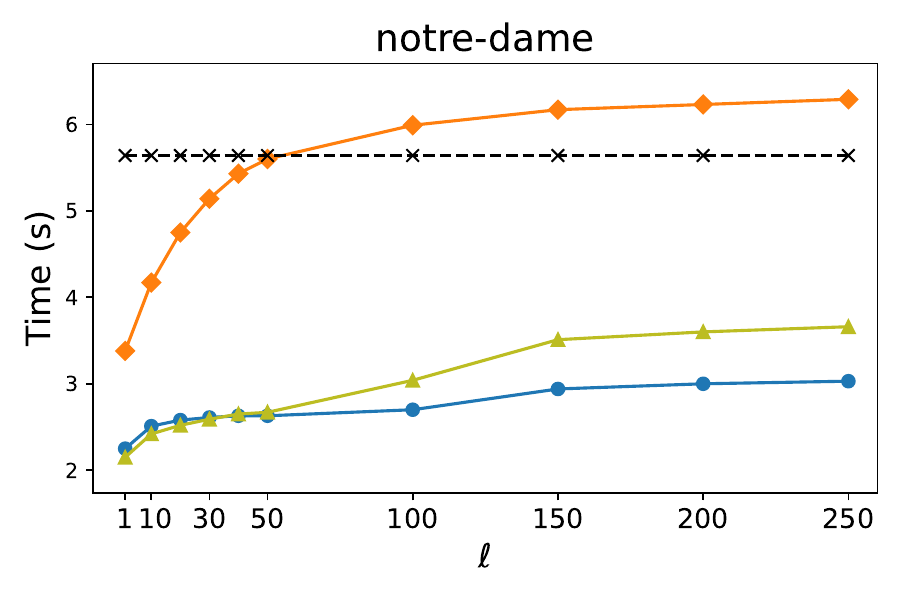}
    \includegraphics[width=0.32\textwidth]{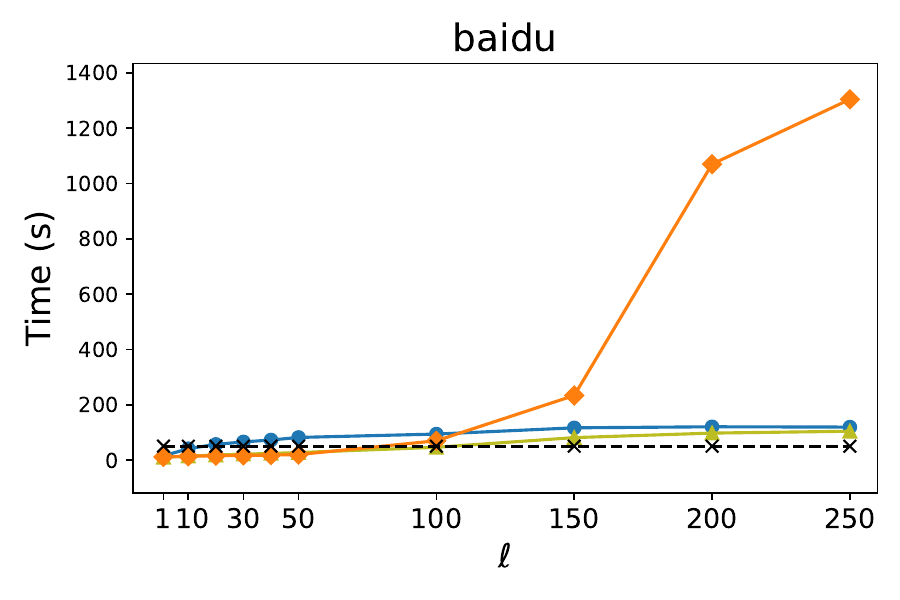}
    \includegraphics[width=0.32\textwidth]{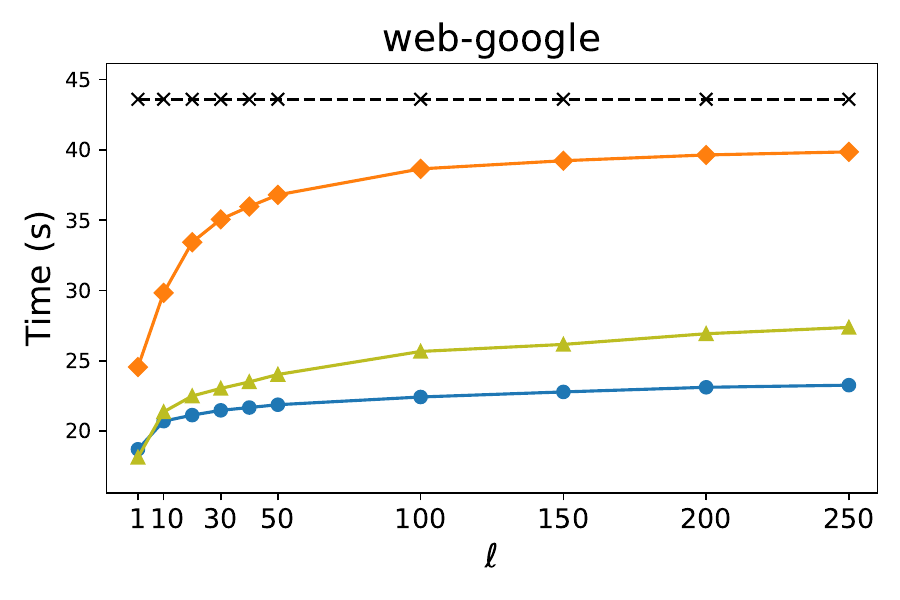} \\
    \includegraphics[width=0.32\textwidth]{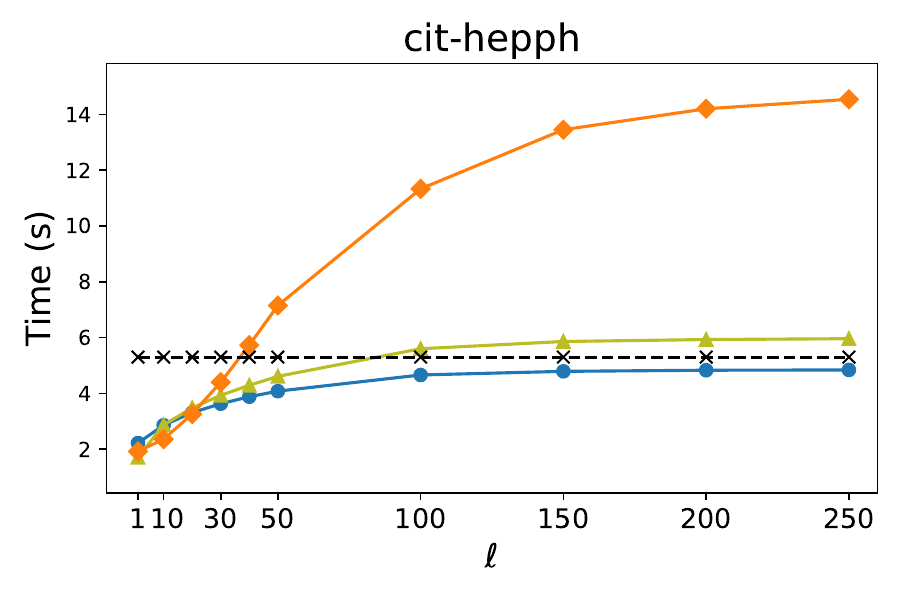}
    \includegraphics[width=0.32\textwidth]{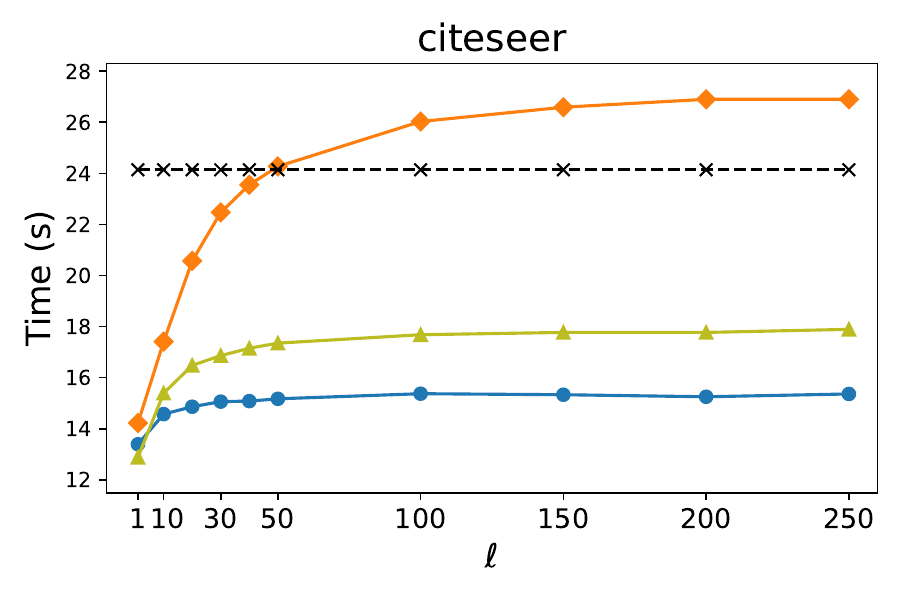} \\
    \includegraphics[width=0.5\textwidth]{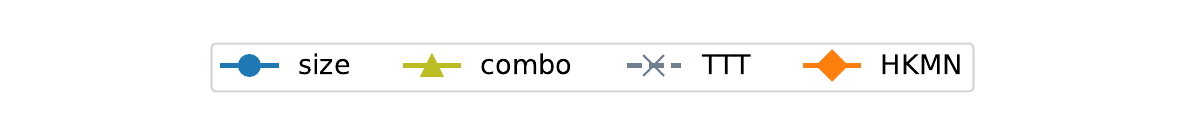}
    \hfill
    \caption{Comparison of \size, \combo, \hkmn, and \ttt\ on real-world graphs.}
    \label{fig:comparison-best}
\end{figure}

Interestingly, the picture is reversed in the right radar plot, which reports the running time normalized with respect to the slowest execution: \degeneracy, which is the most powerful pruning strategy, is also almost always the slowest, with the exception of {\small\tt baidu}, where \degree\ performs worst. The heuristics with weaker pruning, most notably \size, \softcore\ and \combo, are instead the fastest. Among these, \softcore\ and \combo\ behave very similarly, with \combo\ being slightly preferable in some cases. \degree\ exhibits an intermediate performance across benchmarks. Overall, the results highlight a clear trade-off between pruning power and efficiency: the running times observed in theory to perform the pruning tests (see \cref{sec:enumeration}) appear to have a stronger impact on performance than the actual amount of pruned nodes. Since running time is ultimately the main performance metric, in the remainder of this section we will focus on \size\ and \combo, which appear to be the most effective strategies.

\subsection{Comparison with state-of-the-art competitors}
\label{ss:competitors}

In~\cref{fig:comparison-best} we compare our best heuristics, \size\ and \combo, against \hkmn\ and \ttt, also analyzing how $\ell$ affects the running times. We chose a range of values for $\ell$ based on the observations in \cref{ss:distribution}, covering both the lower and the upper spectrum of the isolation factor.
A few trends emerge clearly from the charts. First, the running time of \ttt\ is independent of $\ell$, thus remaining constant. The two heuristics \size\ and \combo\ typically deliver the best performance over a large range of values of $\ell$, with runtimes that remain low and scale smoothly as $\ell$ increases, when larger portions of $\mathcal{T}_G$ need to be explored. Their curves are often close (see, e.g., {\small\tt livemocha} or {\small\tt cfinder-google}), \combo\ is marginally faster than \size\ on a few  instances (most notably, {\small\tt stanford} and {\small\tt italian-cnr}), but \size\ seems preferable in most of the other ones. 
On the other side, \hkmn\ is by far the slowest algorithm when $\ell$ gets large, with runtimes that grow quickly with $\ell$: this is due to an $\ell$-parameterized subroutine for finding minimal vertex covers. It can be instead competitive when $\ell$ is very small. \ttt\ is only occasionally competitive with \size\ and \combo, but still outperforms \hkmn\ for large $\ell$ except on {\small\tt web-google} and {\small\tt dblp}.
The observed relative differences between the algorithms persist across the datasets.
Even weak pruning heuristics that add a small test overhead on nodes of $\mathcal{T}_G$ translate into significant speedups, which can be as large as $4\times$ and $2\times$ with respect to \hkmn\ and \ttt, respectively.

\subsection{Experiments with synthetic networks}\label{app:experiments-synthetic}

We now report experiments on synthetic instances that mostly confirm the analysis on real-world graphs.

\begin{figure}[t]
    \centering
    \begin{subfigure}{\textwidth}
        \centering
        \includegraphics[width=1\textwidth]{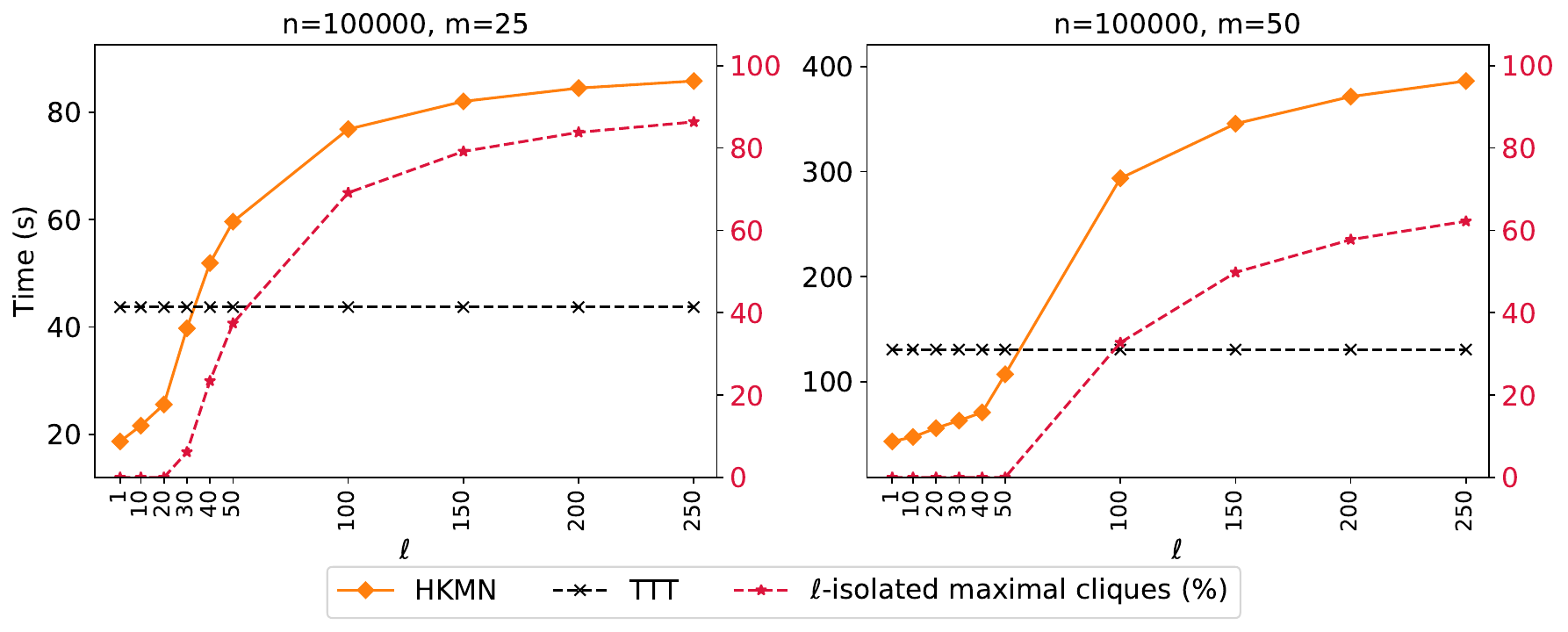}
        \caption{ Comparison between \hkmn\ and \ttt.}
        \label{fig:ba-comparison-a}
    \end{subfigure}
    \begin{subfigure}{\textwidth}
        \centering
        \includegraphics[width=1\textwidth]{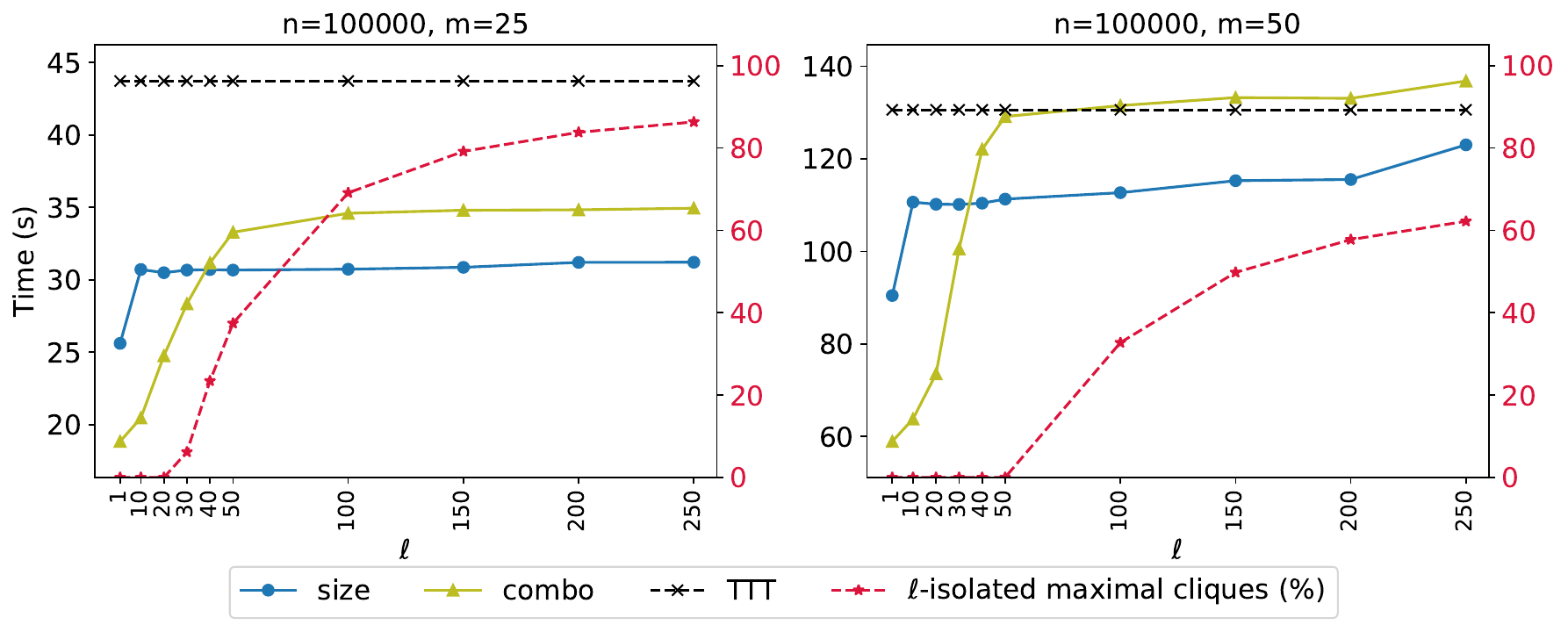}
        \caption{Comparison among \size, \combo, and \ttt.}
        \label{fig:ba-comparison-b}
    \end{subfigure}
    \caption{Results on BA$_{n,m}$ synthetic graphs. The dashed red line shows the number of $\ell$-isolated maximal cliques relative to the total number. 
    }
    \label{fig:ba-comparison}
\end{figure}

We first discuss experiments on BA$_{n,m}$ networks.
\cref{fig:ba-comparison} compares \ttt\ against \hkmn\ (\cref{fig:ba-comparison-a}) and \size\ and \combo\ (\cref{fig:ba-comparison-b}), respectively. For $m=25$ and $\ell \leq 20$, we have that \hkmn\ is comparable with \combo\ and both of them perform better than \size. However, the dashed red line also shows that in the above range the $\ell$-isolated maximal cliques are very few (we also experimentally assessed their small size, akin to real-world networks). For higher values of $\ell$, the running time of \hkmn\ seems to be strongly dependent on the output size. Heuristic \size\ (and sometimes \combo), instead, provide a more efficient solution when $\ell \geq 30$. Curve trends are similar when $m=50$. The percentage of $\ell$-isolated cliques decreases: for $\ell=250$ it is barely above $60\%$ and remains close to zero for $\ell \leq 50$ (in this range \hkmn\ is the fastest algorithm). When $\ell>50$, \ttt\ is faster than \combo\ and \hkmn, but \size\ remains the preferred solution.

Networks generated according to the $G_{n,m,p}$ model have substantially different properties: they are typically very dense and $\ell$-isolated maximal cliques are barely present for small values of $\ell$ (e.g., $\ell < 80$). Conversely, for high values of $\ell$ (e.g., $\ell > 100$), their percentage becomes close to $100\%$, after a very steep transition (see the dashed red line in \cref{fig:gnmp-comparison}).
We reproduced the analysis of~\cite{DBLP:journals/tcs/HuffnerKMN09}, reporting in \cref{fig:gnmp-comparison} only the results with respect to~$\ell$.
Our results confirm those in~\cite{DBLP:journals/tcs/HuffnerKMN09}, with \hkmn\ outperforming other approaches for small values of $\ell$. \ttt\ seems instead the algorithm of choice for larger values. In any case, these graphs appear to represent worst-case scenarios for the algorithms proposed in this paper, although they remain very far from the properties of real-world networks.

\begin{figure}[thb]
\centering\includegraphics[width=1\textwidth]{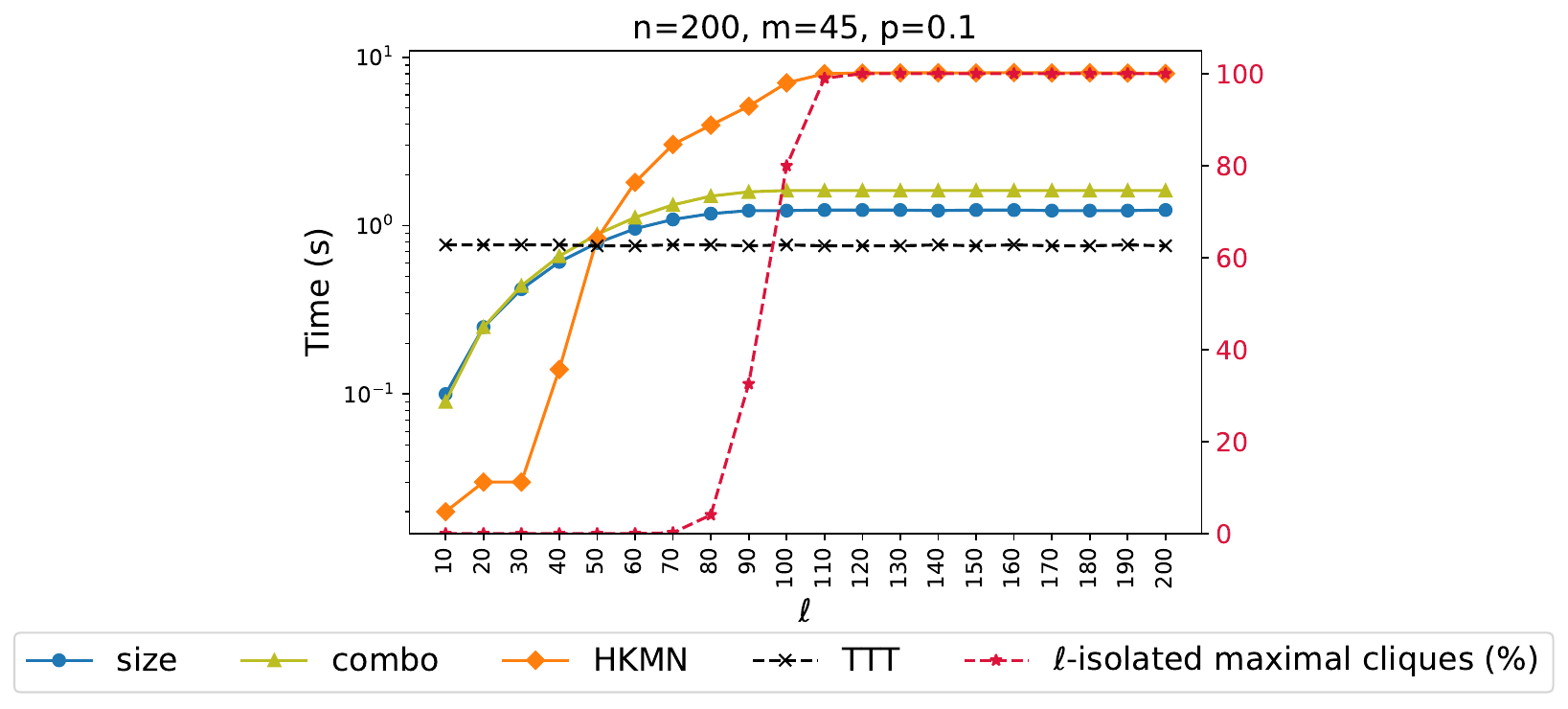}
\caption{Results on $G_{n,m,p}$ synthetic graphs. The dashed red line shows the number of $\ell$-isolated maximal cliques relative to the total number.}
\label{fig:gnmp-comparison}
\end{figure}

\section{Conclusions and open problems}\label{sec:conclusions}

This paper addressed the problem of enumerating isolated cliques, introducing pruning heuristics that can be applied on top of classical maximal cliques enumeration algorithms. We have theoretically proved correctness and conducted an extensive experimental analysis on a variety of benchmarks, proving the effectiveness of two of our variants against state-of-the-art competitors. 
Overall, $\ell$-isolation appears to be a useful lens for filtering cliques according to how strongly they are separated from the rest of the network. Small values of $\ell$ expose only the most tightly isolated structures, at the cost of excluding larger cliques. At the other extreme, too large values of $\ell$ may contribute little beyond the raw clique distribution, as almost every clique qualifies. The most informative patterns emerge at intermediate thresholds, such as $\ell=50$ or $100$, which strike a balance between selectivity and inclusiveness: they admit a substantial fraction of cliques, even larger ones, while still preserving the notion of relative isolation.
A natural way to address the tradeoff between selectivity and clique size, which is closely tied to the choice of $\ell$, could be to introduce alternative metrics in which the isolation factor is defined relative to clique size, rather than as an absolute threshold.

Isolation, and clique summarization more broadly, pose many other open problems (see also~\cite{DFP25}). In particular, it would be worthwhile to investigate whether our approach can be extended to the computation of $\ell$-isolated cliques by size (Problem 3 in~\cite{DFP25})
and to the stronger notion of max-$\ell$-isolation (see \cref{ss:l-isolation}). Another promising direction is to examine whether the problem admits fixed-parameter tractable algorithms under alternative parameters, such as degeneracy, either by refining the techniques of~\cite{DBLP:conf/esa/ItoIO05,DBLP:journals/talg/ItoI09} 
or by focusing on specific graph families with distinctive structural properties.

\bibliography{references}

\end{document}